\newtheorem{theorem}{Theorem}
\newtheorem{question}{Question}
\newtheorem{corollary}{Corollary}
\newtheorem{proposition}{Proposition}
\newcommand{\comment}[1]{}
\DeclareMathOperator{\bin}{bin}
\begin{document}
	
\title{Subset Synchronization in Monotonic Automata}
\author{Andrew Ryzhikov$^{1, 2}$, Anton Shemyakov$^3$}
\date{%
	$^1$ Universit\'e Grenoble Alpes, Laboratoire G-SCOP, 38031 Grenoble, France\\%
	$^2$ United  Institute of Informatics Problems  of NASB, 
	220012 Minsk, Belarus\\
	$^3$ Belarusian State University, 220030 Minsk, Belarus\\
	{\tt ryzhikov.andrew@gmail.com, shemyanton@gmail.com}
}
\maketitle

\begin{abstract} We study extremal and algorithmic questions of subset and careful synchronization in monotonic automata. We show that several synchronization problems that are hard in general automata can be solved in polynomial time in monotonic automata, even without knowing a linear order of the states preserved by the transitions. We provide asymptotically tight bounds on the maximum length of a shortest word synchronizing a subset of states in a monotonic automaton and a shortest word carefully synchronizing a partial monotonic automaton. We provide a complexity framework for dealing with problems for monotonic weakly acyclic automata over a three-letter alphabet, and use it to prove NP-completeness and inapproximability of problems such as {\sc Finite Automata Intersection} and the problem of computing the rank of a subset of states in this class. We also show that checking whether a monotonic partial automaton over a four-letter alphabet is carefully synchronizing is NP-hard. Finally, we give a simple necessary and sufficient condition when a strongly connected digraph with a selected subset of vertices can be transformed into a deterministic automaton where the corresponding subset of states is synchronizing.
\end{abstract}
	
\section{Introduction}

Let $A = (Q, \Sigma, \delta)$ be a deterministic finite automaton (which we further simply call an {\em automaton}), where $Q$ is the set of its states, $\Sigma$ is a finite alphabet and $\delta: Q \times \Sigma \to Q$ is a transition function. Note that our definition of automata does not include initial and accepting states. The mapping $\delta$ can be inductively extended to the mapping $Q \times \Sigma^* \to Q$, which we also denote as $\delta$: for each word $xw$, where $x$ is a letter, take $\delta(q, xw) = \delta(\delta(q, x), w)$. An automaton is called {\it synchronizing} if there exists a word that maps every its state to some fixed state. Such word is also called {\em synchronizing}. Synchronizing automata play an important role in manufacturing, coding theory and biocomputing, and model systems that can be controlled without knowing their actual state \cite{Volkov2008}.

Synchronizing automata model devices that can be reset, by applying a synchronizing word, to some particular state without having any information about their current state. Automata with a synchronizing set of states model devices that can be reset to a particular state with some partial information about the current state, namely when it is known that the current state belongs to a synchronizing subset of states. A set $S \subseteq Q$ of states of an automaton $A = (Q, \Sigma, \delta)$ is called {\em synchronizing} if there exists a word $w \in \Sigma^*$ and a state $q \in Q$ such that the word $w$ maps each state $s \in S$ to the state $q$. The word $w$ is said to {\em synchronize} the set $S$. It follows from the definition that an automaton is synchronizing if and only if the set $Q$ of all its states is synchronizing.

In this paper, we deal with monotonic and weakly acyclic automata. An automaton $A = (Q, \Sigma, \delta)$ is called {\em monotonic} if there is a linear order $\le$ of its states such that for each $x \in \Sigma$ if $q_1 \le q_2$ then $\delta (q_1, x) \le \delta(q_2,x)$. In this case we say that the transitions of the automaton {\em preserve}, or {\em respect} this order. Monotonic automata play an important role in the part-orienting process in manufacturing \cite{Ananichev2004}, some connections of monotonic automata with infinite games are described in \cite{Kopczynski2006} and \cite{KopczynskiThesis2008}. Once the order $q_1, \ldots, q_n$ of the states is fixed, we denote $[q_i, q_j] = \{q_\ell \mid i \le \ell \le j\}$, and $\min S$, $\max S$ as the minimum and maximum states of $S \subseteq Q$ with respect to the order. The following open problem is mentioned in \cite{Shcherbak2006}, showing that monotonic automata are not fully understood, and require more investigation.

\begin{question}
	Find a combinatorial characterization (for example, using regular expressions) of languages recognized by monotonic automata.
\end{question}

An automaton $A = (Q, \Sigma, \delta)$ is called {\em weakly acyclic} if there exists an order of its states $q_1, \ldots, q_n$ such that if $\delta(q_i, x) = q_j$ for some $x \in \Sigma$, then $i \le j$. Note that a monotonic automaton does not have to be weakly acyclic, and vice versa. Both weakly acyclic and monotonic automata present proper subclasses of a widely studied class of aperiodic automata \cite{Volkov2008}. An automaton is called {\em orientable}, if there exists a cyclic order of its states that is preserved by all transitions of the automaton (see \cite{Volkov2008} for the discussion of this definition). Each monotonic automaton is obviously orientable. An automaton is called {\em strongly connected} if any its state can be mapped to any other state by some word.

The two fundamental directions is studying synchronization of automata are extremal (bounding the length of a shortest synchronizing word) and algorithmic (exploring the complexity of deciding synchronizability and finding a shortest synchronizing word) questions. 

From the extremal point of view, it is known that any synchronizing $n$-state automaton can be synchronized by a word of length at most $\frac{n^3 - n}{6}$ \cite{Pin1983}, and the famous \v{C}ern{\'y} conjecture states that the length of such word is at most $(n - 1)^2$ \cite{Volkov2008}. A slightly better but still cubic bound is reported in \cite{Szykula2017}. For words synchronizing a subset of states, the situation is quite different. It is known that the length of a shortest word synchronizing a subset of states in a binary strongly connected automaton can be exponential in the number of states of the automaton \cite{Vorel2016}. In weakly acyclic automata, there is a quadratic upper bound on the length of such words \cite{Ryzhikov2017}. For orientable $n$-state automata, a tight $(n - 1)^2$ upper bound on the length of a shortest word synchronizing a subset of states is known \cite{Eppstein1990}.

Checking whether an automaton is synchronizing can be performed in polynomial time \cite{Volkov2008}, but checking whether a given subset of states in an automaton is synchronizing (the {\sc Sync Set} problem) is a PSPACE-complete problem in binary strongly connected automata \cite{Vorel2016}, and a NP-complete problem in binary weakly acyclic automata \cite{Ryzhikov2017}.

Eppstein \cite{Eppstein1990} provides a polynomial algorithm for the {\sc Sync Set} problem, as well as for some other problems, in orientable automata. However, the proposed algorithms assume that a cyclic order of the states preserved by the transitions is known. Since the problems of recognizing monotonic and orientable automata are NP-complete \cite{Szykula2015}, a linear or cyclic order preserved by the transitions of an automaton cannot be computed in polynomial time unless P = NP. Thus, we should avoid using these orders explicitly in algorithms, so we have to investigate other structural properties of monotonic automata. As  shown in this paper, several synchronization problems are still solvable in polynomial time in monotonic automata without knowing an order of states preserved by the transitions.

Approximating the length of a shortest word synchronizing a $n$-state automaton within a factor of $O(n^{1 - \epsilon})$ for any $\epsilon > 0$ in polynomial time is impossible unless P = NP \cite{Gawrychowski2015}. For finding the length of a shortest word synchronizing a subset of states in binary weakly acyclic automata a similar inapproximability bound holds \cite{Ryzhikov2017}.

A problem closely connected to subset synchronization is careful synchronization of partial automata. A {\em partial automaton} $A$ is a triple $(Q, \Sigma, \delta)$, where $Q$ and $\Sigma$ are the same as in the definition of a finite deterministic automaton, and $\delta$ is a partial transition function (i.e., a transition function which may be undefined for some argument values). A word $w$ is said to {\em carefully synchronize} a partial automaton $A$ if it maps all its states to the same state, and each mapping corresponding to a prefix of $w$ is defined for each state. The automaton $A$ is then called {\em carefully synchronizing}.

The length of a shortest word carefully synchronizing a $n$-state partial automaton is also a subject of research. Rystsov \cite{Rystsov1980}, Martyugin \cite{Martyugin2010lower}, Vorel \cite{Vorel2016} and de Bondt et al. \cite{Bondt2017} propose consecutive improvements of (exponential) lower bounds for this value, both in the case of constant and non-constant alphabets. Rystsov \cite{Rystsov1980} provides an upper bound of $O(3^{\frac{n}{3}})$ on this value. A simple relation between careful synchronization and subset synchronization is provided by Lemma 1 of \cite{Vorel2016}.

Deciding whether a partial automaton is carefully synchronizing is PSPACE-complete for binary partial automata \cite{Martyugin2010}, and moreover for binary strongly connected partial automata \cite{Vorel2016}. It is also NP-hard for aperiodic partial automata over a three-letter alphabet \cite{Ryzhikov2017}.

A synchronizing set of states can be considered as a set compressible to one element. A more general case of a set compressible to a set of size $r$ is defined by the notion of the rank of a subset. Given an automaton $A = (Q, \Sigma, \delta)$, the {\em rank} of a word $w \in \Sigma^*$ with respect to a set $S \subseteq Q$ is the size of the image of $S$ under the mapping defined by $w$ in $A$, i.e., the number $|\{\delta(s, w) \mid s \in S\}|$. The {\em rank} of an automaton (respectively, of a subset of states) is the minimum among the ranks of all words $w \in \Sigma^*$ with respect to the whole set $Q$ of states of the automaton (respectively, to the subset of states). It follows from the definition that a set of states has rank $1$ if and only if it is synchronizing. A state in an automaton is a {\em sink state} if all letters map this state to itself.

For $n$-state monotonic automata of rank at most $r$, Ananichev and Volkov \cite{Ananichev2004} show an upper bound of $n - r$ on the length of a shortest word of rank at most $r$, and also provide bounds on the length of a shortest word of interval rank at most $r$. Shcherbak \cite{Shcherbak2006} continues the investigation of words of bounded interval rank in monotonic automata. Ananichev \cite{Ananichev2005} provides bounds on the length of a shortest word of rank $0$ in partial monotonic automata of rank $0$ (a partial automaton is called {\em monotonic} if there exists a linear order of its states preserved by all defined transitions).

In this paper, we study both extremal and algorithmic questions of subset synchronization in monotonic automata. In Section \ref{sec-struct}, we provide structural results about synchronizing sets of states in monotonic automata and give algorithmic consequences of this results.
In Section \ref{sec-bound}, we provide lower and upper bounds on the maximum length of shortest words synchronizing a subset of states in monotonic automata, and show some lower bounds for related problems. In Section \ref{sec-complex}, we provide NP-hardness and inapproximability of several problems related to subset synchronization and careful synchronization of monotonic automata. In Section~\ref{sec-road} we give necessary and sufficient conditions when a strongly connected digraph can be colored resulting in an automaton with a pre-defined synchronizing set.

A conference version of this paper was published in \cite{Ryzhikov2017Monotonic}. Besides presenting new results, this paper corrects errors of the conference version.

\section{Structure of Synchronizing Sets} \label{sec-struct}

Let $A$ be an automaton, and $S$ be a subset of its states. In general, if any two states in $S$ can be synchronized (i.e., form a synchronizing set), $S$ does not necessarily have to be synchronizing, as it is shown by the following theorem.

\begin{theorem} \label{tm-pairwise-no}
	For any positive integer $k_0$, there exists a binary weakly acyclic automaton $A$ and a subset $S$ of its states such that $|S| \ge k_0$, each pair of states in $S$ in synchronizing, but the rank of $S$ equals $|S| - 1$.
\end{theorem}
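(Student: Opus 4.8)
The plan is to construct an explicit family of binary weakly acyclic automata. I want a subset $S$ of size $k = |S| \ge k_0$ that is pairwise synchronizing (any two states of $S$ can be merged by some word) but has rank exactly $k-1$, meaning no word can reduce $|S|$ below $k-1$. Let me think about what structure forces this.

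The key tension: pairwise synchronizable means for any two states $s_i, s_j \in S$ there's a word collapsing them, but globally the best we can do is merge exactly one pair and never more. Let me think about a construction. I want each pair to be mergeable via a *different* word, where the word merging $s_i$ and $s_j$ scatters the other states so they can't be brought back together.

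A natural gadget: states $s_1, \ldots, s_k$ plus some auxiliary "sink-like" states. Over a binary alphabet $\{a, b\}$, weakly acyclic means there's a state ordering with no "backward" transitions — so it's essentially a DAG structure on the strongly connected components, which are singletons except sinks.

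Let me design the construction concretely. Use states $s_1, \ldots, s_k$ (the set $S$) and sink states $t_1, \ldots, t_{k-1}$, plus possibly a single common sink $z$. The idea: arrange so that letter $a$ merges states in one pattern and letter $b$ in another, such that applying any word, the $k$ states land on at least $k-1$ distinct sinks. A clean way: think of the $k$ states as positions and design transitions so that the "reachable merge patterns" correspond to identifying exactly one adjacent pair.

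Key construction idea: Let $a$ map $s_i \mapsto s_{i}$ for most $i$ but merge $s_k$ into $s_{k-1}$ region, and let $b$ do a complementary shift, arranged so that each letter can perform exactly one adjacent merge but never two, because after one merge the states separate into distinct sinks. Concretely I would route, after the first merging letter, each $s_i$ to its own dedicated sink $t_i$, guaranteeing rank $\ge k-1$. Verifying the lower bound (rank is exactly $k-1$, not smaller) is the crux: I need that no sequence of letters ever brings three states together, which I would prove by a monovariant or invariant argument — e.g., assign to each state a distinct "color/coordinate" preserved enough that two states share a color only after a specific merge, and show the letters never collapse two colors simultaneously.

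The main obstacle will be simultaneously achieving (i) pairwise synchronizability for *every* pair, including non-adjacent pairs $s_i, s_j$ with $|i-j|\ge 2$, and (ii) the global lower bound of $k-1$. These pull in opposite directions: to merge a far-apart pair I need words moving states around, yet those same words must not accidentally merge others. I expect the resolution is to have letters that perform a single "bubble" operation (swap-or-merge adjacent states, one at a time) so that any pair can be brought adjacent and merged, but the weakly acyclic (no-backward-transition) constraint forces that once two states coincide, continuing to process sends the remaining states irrevocably to distinct sinks. I would finish by checking the weak acyclicity (exhibiting the required state order) and by an explicit invariant — such as tracking that the multiset of images always contains at least $k-1$ distinct sink states — to certify that the rank equals exactly $|S|-1$.
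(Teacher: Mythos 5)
There is a genuine gap: you correctly identify the central tension (every one of the $\binom{k}{2}$ pairs must be mergeable by \emph{some} word, yet no word may merge more than one pair), but you never resolve it — the proposal stops at ``I would'' and ``I expect'' exactly where the construction has to be exhibited. Worse, the mechanism you gesture at (letters performing a ``swap-or-merge'' of adjacent states, so that a far-apart pair can be bubbled together) is incompatible with weak acyclicity: a swap $s_i \mapsto s_{i+1}$, $s_{i+1}\mapsto s_i$ creates a $2$-cycle in the underlying digraph, and a weakly acyclic automaton admits no cycles other than self-loops. In such an automaton every non-loop transition strictly advances a fixed order, so states cannot be ``brought adjacent'' and then merged; any merging must be routed forward through a DAG from the start. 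Your fallback of ``one merging letter, then dedicated sinks'' also fails on its own, since two letters can realize only two merge patterns, not all $\binom{k}{2}$ pairs.

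The paper's resolution, which is the missing key idea, is a binary \emph{addressing} scheme: take $k=2^\ell$, and build a gadget so that from $s_i$ the word $\bin(i)$ (its $\ell$-bit address) leads to a common state $f$, while every other length-$\ell$ word leads to a private state $t_i$; from $t_i$ the same gadget is repeated, leading to $f$ on $\bin(i)$ and to a private sink $p_i$ otherwise. Then the word $\bin(i)\bin(j)$ with $i\ne j$ merges exactly $s_i$ and $s_j$ at $f$ (so every pair is synchronizing), while for any word the images of the remaining states end up at pairwise distinct sinks $p_h$, so the rank is exactly $k-1$. All transitions go strictly ``down'' the gadget, so weak acyclicity is immediate. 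Without this (or an equivalent) concrete device, your argument does not establish the theorem.
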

\begin{proof}
	Consider the following automaton $A = (Q, \{0, 1\}, \delta)$. Let $S = \{s_0, \ldots, s_{k - 1}\}$. Let $k = 2^\ell$ for an integer number $\ell$, and let $\bin(i)$ be a word which is equal to the binary representation of $i$ of length $\ell$ (possibly with zeros at the beginning). We introduce new states $t_i, p_i$ for $0 \le i \le k - 1$, a state $f$, and new intermediate states in $Q$ as follows. For each $s_i$, $0 \le i \le k - 1$, consider a construction sending $s_i$ to $f$ for a word $\bin(i)$, and to $t_i$ by any other word of length $\ell$. 
	
	
	For each $t_i$, consider the same construction sending $t_i$ to $f$ for a word $\bin(i)$, and to $p_i$ otherwise. For each $i$, define both transitions from $p_i$ as self-loops. Define both transitions from $f$ as self-loops.
	
	In this construction, each word applied after a word of length $2\ell$ obviously has no effect. Consider a word $w$ of length $2\ell$, $w = w_1w_2$, where both $w_1$ and $w_2$ have length $\ell$. If $w_1 = w_2$, then the image of $S$ under the mapping defined by $w$ has size $k$. Otherwise, $w$ synchronizes two states $s_i$ and $s_j$ with $\bin(i) = w_1$ and $\bin(j) = w_2$ and maps all other states to different states. Thus, the rank of $S$ equals $k - 1$.
\end{proof}

The size of the whole automaton is $O(|S| \log |S|)$, thus $S$ can be large comparing to the size of the whole set of states in the automaton.

Since the {\sc Sync Set} problem is PSPACE-complete in strongly connected automata, pairwise synchronization of states in a subset does not imply that this subset is synchronizing for this class of automata unless P = PSPACE. Thus, it is reasonable to ask the following question.

\begin{question}
	How large can be the rank of a subset of states in a strongly connected automaton such that each pair of states in this subset can be synchronized?
\end{question}

For the rest of section, fix a monotonic automaton $A = (Q, \Sigma, \delta)$ and an order $q_1, \ldots, q_n$ of its states preserved by all transitions. As shown by the next theorem, the situation in monotonic automata is in some sense opposite to the situation described by Theorem \ref{tm-pairwise-no}.
 
\begin{theorem}\label{tm-pairwise}
	Let $S \subseteq Q$ be a subset of states of $A$. Then $S$ is synchronizing if and only if any two states in $S$ can be synchronized.
\end{theorem}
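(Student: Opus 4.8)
The plan is to exploit the order-preserving nature of the transitions to reduce synchronization of the whole subset to synchronization of a single, carefully chosen pair. The forward direction is immediate: if a word $w$ synchronizes $S$, then in particular it sends any two states of $S$ to a common state, so every pair of states in $S$ is synchronizing. All the content lies in the converse, and the key idea is a sandwiching argument at the extremes of $S$.

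For the converse, I would single out the two extreme states of $S$ with respect to the fixed order, namely $\min S$ and $\max S$. By hypothesis the pair $\{\min S, \max S\}$ is synchronizing, so there is a word $w$ with $\delta(\min S, w) = \delta(\max S, w)$. The crucial observation is that monotonicity forces every remaining state of $S$ to be squeezed between these two under the action of $w$: for any $s \in S$ we have $\min S \le s \le \max S$, and since the transitions respect the order, $\delta(\min S, w) \le \delta(s, w) \le \delta(\max S, w)$. As the outer two values coincide, $\delta(s, w)$ is forced to equal that same value, so $\delta(s, w) = \delta(\min S, w)$ for every $s \in S$. Hence the single word $w$ maps all of $S$ to one state, and $S$ is synchronizing.

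The only point that requires care is justifying that monotonicity of the single-letter transition maps extends to arbitrary words, i.e.\ that $q_1 \le q_2$ implies $\delta(q_1, w) \le \delta(q_2, w)$ for every $w \in \Sigma^*$; I would verify this by a routine induction on the length of $w$, using that a composition of order-preserving maps is order-preserving. I do not expect a genuine obstacle here: once one recognizes that it suffices to synchronize the extreme pair $\{\min S, \max S\}$, the squeezing inequality does all the work, and the boundedness of the image between the two equal endpoints yields the conclusion directly.
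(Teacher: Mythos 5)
Your proposal is correct and follows essentially the same route as the paper: both arguments reduce the problem to synchronizing the pair $\{\min S, \max S\}$ and then use monotonicity (extended to arbitrary words by the routine induction you mention) to squeeze every intermediate state of $S$ between the two coinciding images. No gaps; this matches the paper's proof.
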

\begin{proof}
	Obviously, any subset of a synchronizing set is synchronizing.
	
	In the other direction, if any two states in $S$ can be synchronized, then the minimal state $q_\ell = \min S$ and the maximal state $q_r = \max S$ in $S$ can be synchronized by a word $w \in \Sigma^*$. Let $q = \delta(q_\ell, w) = \delta(q_r, w)$. Then the interval $[q_\ell, q_r] = \{q_\ell, \ldots, q_r\}$ is synchronized by $w$, because each state of $[q_\ell, q_r]$ is mapped to the interval $[\delta(q_\ell, w), \delta(q_r, w)]$ = $\{q\}$, since $A$ is monotonic. Thus, $S \subseteq [q_\ell, q_r]$ is synchronizing.
\end{proof}

\begin{corollary} \label{cor-check-set}
	The problem of checking whether a given set $S$ is synchronizing can be solved in $O(|Q|^2 \cdot |\Sigma|)$ time and space for monotonic automata.
\end{corollary}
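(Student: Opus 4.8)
The plan is to give a polynomial-time algorithm that decides whether a given set $S$ is synchronizing in a monotonic automaton, without ever computing the order preserved by the transitions (which would be NP-hard). By Theorem \ref{tm-pairwise}, $S$ is synchronizing if and only if every pair of states in $S$ can be synchronized. So the whole task reduces to checking pairwise synchronizability for all pairs in $S$, and it suffices to check this for the set $Q$ of all states, since checking pairwise synchronizability over $Q$ is at least as expensive and the answer restricts to any $S$. First I would build the \emph{pair automaton} (often called the \emph{power automaton on pairs}): its state set consists of all unordered pairs $\{p, q\}$ together with the singletons $\{p\}$, and a letter $x$ sends $\{p, q\}$ to $\{\delta(p,x), \delta(q,x)\}$, which collapses to a singleton exactly when $x$ synchronizes the pair. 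A pair $\{p,q\}$ is synchronizable if and only if in this graph there is a directed path from $\{p,q\}$ to some singleton.

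The key steps, in order, are as follows. First, construct the pair automaton; it has $O(|Q|^2)$ vertices and $O(|Q|^2 \cdot |\Sigma|)$ edges, since each of the $O(|Q|^2)$ pair-states has exactly one outgoing edge per letter. Second, mark all singleton states as "synchronizable" and perform a single backward reachability search (a BFS or DFS on the reverse of the transition graph) from the set of singletons; every pair-state reached is synchronizable, and every pair-state not reached is not. This backward sweep runs in time linear in the size of the pair automaton, namely $O(|Q|^2 \cdot |\Sigma|)$. Third, having the synchronizable/non-synchronizable label for every pair in $Q$, I would check whether every pair of states drawn from $S$ is labeled synchronizable; if so, output that $S$ is synchronizing, otherwise output that it is not. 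The correctness of this last step is exactly Theorem \ref{tm-pairwise}, which guarantees that pairwise synchronizability of $S$ is equivalent to synchronizability of $S$ in the monotonic setting.

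Note that this procedure never needs the linear order itself; it only relies on the \emph{existence} of such an order, which Theorem \ref{tm-pairwise} converts into the purely combinatorial pairwise condition. This is precisely what lets us sidestep the NP-hardness of recovering the order. The resulting running time and space are dominated by the construction and traversal of the pair automaton, giving the claimed $O(|Q|^2 \cdot |\Sigma|)$ bound.

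I do not expect a genuine obstacle here, since the hard structural content is already packaged in Theorem \ref{tm-pairwise}; the remaining work is a standard product-automaton reachability argument. The only point requiring mild care is the accounting: one must confirm that the backward reachability over the pair automaton can be done in a single pass of cost proportional to its edge count (rather than, say, recomputing reachability per pair, which would blow up the time), and that storing the labels for all $O(|Q|^2)$ pairs accounts for the space bound. Both are routine once the pair automaton is set up correctly.
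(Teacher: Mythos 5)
Your proposal is correct and follows essentially the same route as the paper: reduce to pairwise synchronizability via Theorem \ref{tm-pairwise}, build the pair (power) automaton on $2$-element and $1$-element subsets, reverse its arcs, and run a single backward BFS from the singletons, with the same $O(|Q|^2 \cdot |\Sigma|)$ accounting. No substantive differences from the paper's argument.
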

\begin{proof}
	By Theorem \ref{tm-pairwise} it is enough to check that each pair of states in $S$ can be synchronized, which can be done by solving the reachability problem in the subautomaton of the power automaton, built on all $2$-element and $1$-element subsets of $Q$ \cite{Volkov2008}. There are $\frac{|Q|(|Q| + 1)}{2}$ states in this subautomaton $A^2$. We need to check that from each state $\{q_i, q_j\}$, $q_i, q_j \in S$, in $A^2$ some singleton set is reachable. Consider the underlying digraph of $A^2$ and reverse all arcs in it. Then we need to check that in this new digraph each vertex $\{q_i, q_j\}$, $q_i, q_j \in S$, is reachable from some singleton. To check it, run breadth-first search simultaneously from all singletons \cite{Cormen2009}.
	
	To construct the subautomaton we need $O(|Q|^2 \cdot |\Sigma|)$ time and space, and breadth-first search requires time and space linear in the number of arcs of the digraph.
\end{proof}

\begin{corollary}
	A shortest word synchronizing a given subset $S$ of states can be found in $O(|Q|^4 \cdot |\Sigma|)$ time and $O(|Q|^2 \cdot |\Sigma|)$ space for monotonic automata.
\end{corollary}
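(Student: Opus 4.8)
The plan is to reduce the problem to computing shortest synchronizing words for pairs of states, and then to reconstruct a shortest word for $S$ by a forward greedy procedure that never refers to the order $\le$ explicitly. First, using Corollary~\ref{cor-check-set} I would check in $O(|Q|^2 \cdot |\Sigma|)$ time that $S$ is synchronizing (otherwise no word exists). For the length, observe that by the proof of Theorem~\ref{tm-pairwise} a word synchronizes $S$ if and only if it synchronizes the pair $\{\min S, \max S\}$, and a shortest word synchronizing this pair maps the whole interval $[\min S, \max S] \supseteq S$ to a single state. Writing $d(\{q_i,q_j\})$ for the length of a shortest word synchronizing the pair $\{q_i,q_j\}$, this shows that the length of a shortest word synchronizing $S$ equals $\max_{q_i,q_j \in S} d(\{q_i,q_j\})$: any word synchronizing $S$ must synchronize every pair, while the extreme pair attains this maximum and its shortest word already synchronizes all of $S$. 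The values $d(\{q_i,q_j\})$ for all pairs are obtained exactly as in Corollary~\ref{cor-check-set}, by a single breadth-first search from all singletons in the reversed underlying digraph of the power automaton $A^2$, in $O(|Q|^2 \cdot |\Sigma|)$ time and space.

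The difficulty is that producing an actual shortest word, rather than merely its length, seems to require knowing which states of $S$ are $\min S$ and $\max S$, and such an order cannot be computed in polynomial time unless $\mathrm{P}=\mathrm{NP}$. To avoid this, I would reconstruct the word greedily. Define the \emph{potential} of a set $T$ as $\Phi(T) = \max_{q_i,q_j \in T} d(\{q_i,q_j\})$, which can be evaluated in $O(|T|^2)$ time from the precomputed table of pair distances. Starting from $T = S$, at each step I would scan the letters of $\Sigma$ and pick any $x$ with $\Phi(\delta(T,x)) = \Phi(T) - 1$, append $x$ to the output word, and replace $T$ by $\delta(T,x)$, stopping once $\Phi(T) = 0$, that is, once $T$ is a singleton.

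The key step to justify is that such a letter always exists. This follows from monotonicity: since every transition preserves $\le$, the minimum and maximum of $\delta(T,x)$ are the images of $\min T$ and $\max T$, so the extreme pair of $\delta(T,x)$ is exactly the image of the extreme pair of $T$. Hence $\Phi(\delta(T,x)) = d(\delta(\{\min T, \max T\}, x))$, and choosing $x$ to be the first letter of a shortest word synchronizing $\{\min T, \max T\}$ decreases this distance by exactly $1$. The procedure therefore terminates after $\Phi(S) = \max_{q_i,q_j\in S} d(\{q_i,q_j\})$ steps, and the emitted word is a shortest word synchronizing $S$. For the running time, the shortest word, and hence the number of greedy steps, has length at most the number of vertices of $A^2$, which is $O(|Q|^2)$; each step examines $|\Sigma|$ letters, and for each computes $\delta(T,x)$ in $O(|Q|)$ time and its potential in $O(|Q|^2)$ time, giving $O(|Q|^2) \cdot |\Sigma| \cdot O(|Q|^2) = O(|Q|^4 \cdot |\Sigma|)$ time overall, which dominates the $O(|Q|^2 \cdot |\Sigma|)$ preprocessing; the stored power automaton, the distance table, and the current word fit in $O(|Q|^2 \cdot |\Sigma|)$ space. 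I expect the main obstacle to be precisely this last correctness argument, verifying that the potential strictly decreases, which is where monotonicity, through the preservation of the extreme pair under every transition, does the essential work and lets us bypass any explicit knowledge of the order.
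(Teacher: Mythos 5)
Your proof is correct, and it shares the paper's foundation --- Theorem~\ref{tm-pairwise} together with monotonicity reduces synchronizing $S$ to synchronizing the extreme pair $\{\min S,\max S\}$, with all pair data computed in the power automaton $A^2$ --- but it takes a genuinely different route to actually produce the word. The paper computes, for \emph{every} pair of states, an explicit shortest word synchronizing that pair, keeps those candidate words that happen to synchronize $S$, and returns the shortest one; correctness holds because the (unidentifiable) extreme pair contributes a candidate that is simultaneously a shortest word for $S$. You instead store only the table of pair distances $d(\{q_i,q_j\})$ from the multi-source BFS of Corollary~\ref{cor-check-set} and emit the word letter by letter, guided by the order-free potential $\Phi(T)=\max_{q_i,q_j\in T} d(\{q_i,q_j\})$, which by monotonicity equals $d(\{\min T,\max T\})$ because a shortest word for the extreme pair collapses the whole interval, and which can always be decreased by one since every letter maps the extreme pair of $T$ to the extreme pair of $\delta(T,x)$. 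Your correctness argument is sound (in particular, $\Phi(T)=0$ iff $T$ is a singleton, and any $\Phi$-decreasing letter is as good as the ``right'' one since the emitted word has length exactly $\Phi(S)$, the optimum), and your accounting meets the stated $O(|Q|^4\cdot|\Sigma|)$ time and $O(|Q|^2\cdot|\Sigma|)$ space. Two things your variant buys: it makes fully explicit why the unknown linear order is never consulted (the potential is symmetric in the states), and it sidesteps the paper's final filtering step of testing each of the $O(|Q|^2)$ candidate words, each of length $O(|Q|^2)$, against all of $S$ --- a step whose naive cost is $O(|Q|^5)$ rather than the $O(|Q|^4)$ the paper asserts, so your version is arguably the cleaner way to certify the claimed bound. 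The paper's enumerate-and-filter scheme is conceptually simpler and needs no per-step recomputation, but otherwise the two approaches are of comparable strength.
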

\begin{proof}
	Consider the following algorithm. For each pair of states, find a shortest word synchronizing this pair. This can be done by solving the shortest path problem in the subautomaton of the power automaton, build on all $2$-element and $1$-element subsets of $Q$ \cite{Volkov2008}. Let $W$ be the set of all such words that synchronize $S$. Output the shortest word in $W$.
	
	By an argument similar to the proof of Theorem \ref{tm-pairwise}, any shortest word synchronizing $\{\min S, \max S\}$ is a shortest word synchronizing $S$, thus the algorithm finds a shortest word synchronizing $S$. Since finding a shortest synchronizing word for a pair of states requires $O(|Q|^2 \cdot |\Sigma|)$ time and there are $O(|Q|^2)$ such words, the set $W$ can be found in $O(|Q|^4 \cdot |\Sigma|)$ time. Finding a shortest word in $W$ synchronizing $S$ requres additional $O(|Q|^4)$ time, since we have to check each word. Thus, the total time required by the algorithm is $O(|Q|^4 \cdot |\Sigma|)$.
	
	The algorithm requires $O(|Q|^2 \cdot |\Sigma|)$ space for the subautomaton construction. We don't have to store $W$, since we can check the words in $W$ one by one and store only the shortest one, so we need $O(|Q|^2 \cdot |\Sigma|)$ space.
\end{proof}

\comment{Define the {\em synchronization graph} $G(A)$ of an automaton $A = (Q, \Sigma, \delta)$ as follows. The set of vertices of $G(A)$ is $Q$. Two vertices are adjacent in $G$ if and only if the two corresponding states can be synchronized. 

The following fact follows from the proof of Theorem 4 in \cite{Ryzhikov2017} and shows that in general the structure of synchronization graphs can be very complicated.

\begin{proposition}
	Each graph is an induced subgraph of a synchronization graph of some binary weakly acyclic automaton.
\end{proposition}

However, synchronization graphs of monotonic automata are very special. An {\em interval graph} is the intersection graph of a set of intervals of a line.

\begin{theorem}\label{tm-interval}
	Synchronization graphs of monotonic automata are interval graphs.
\end{theorem} 
\begin{proof}
	For each state $q \in Q$ in $A$, consider the set $S_q$ of all such states $q'$ that $q$ and $q'$ can be synchronized. It follows from the proof of Theorem \ref{tm-pairwise} that each maximal (by inclusion) synchronizing set is an interval $[q_i, q_j]$ for some $q_i, q_j$. Thus, each set $S_q$ forms an interval, which is the union of all intervals that are maximal synchronizing sets containing $q$.
	
	Observe that two states can be synchronized if and only if the corresponding intervals intersect \textcolor{red}{?????????}. Thus, we obtain that the synchronization graph of $A$ is the intersection graph of the constructed set of intervals. 
\end{proof}

Theorem \ref{tm-interval} implies the following.
}

\begin{corollary}
	A synchronizing subset of states of maximum size can be found in $O(|Q|^4 \cdot |\Sigma| + |Q|^5)$ time and $O(|Q|^2 \cdot |\Sigma|)$ space in monotonic automata.
\end{corollary}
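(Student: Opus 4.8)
The plan is to reduce the problem to finding a largest \emph{maximal} synchronizing set and to exploit the fact, established in the proof of Theorem~\ref{tm-pairwise}, that each such set is an interval $[\min S,\max S]$ of the (unknown) order, obtained by synchronizing its two extreme states. A subset is synchronizing if and only if it is a clique of the graph $G$ on $Q$ whose edges are the synchronizable pairs of states (Theorem~\ref{tm-pairwise}), so a synchronizing set of maximum size is a largest maximal clique of $G$, equivalently a maximal synchronizing interval of largest cardinality. The difficulty is that, since the order preserved by the transitions cannot be computed in polynomial time, these intervals must be identified using only the relation ``$\{q,q'\}$ is synchronizable'', which is computable by reachability in the power automaton $A^2$ exactly as in Corollary~\ref{cor-check-set}.

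The key observation I would use describes a maximal synchronizing interval purely in terms of synchronizable pairs. For a synchronizable pair $(a,b)$ put
\[
	C(a,b) = \{\, q \in Q : \{a,q\} \text{ and } \{b,q\} \text{ are both synchronizable}\,\}.
\]
I claim that if $[a,b]$ is a \emph{maximal} synchronizing set with extreme states $a$ and $b$, then $C(a,b) = [a,b]$. The inclusion $[a,b]\subseteq C(a,b)$ is immediate, since $\{a,q\}$ and $\{b,q\}$ are subsets of the synchronizing set $[a,b]$ and hence synchronizing. For the reverse inclusion, if some $q\in C(a,b)$ lay outside $[a,b]$, say below $a$ in the order, then synchronizability of $\{q,b\}$ would, by the interval argument in the proof of Theorem~\ref{tm-pairwise}, make the strictly larger interval $[q,b]$ synchronizing, contradicting maximality of $[a,b]$; the case $q$ above $b$ is symmetric. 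Thus $C(a,b)=[a,b]$ is synchronizing and has the cardinality of the maximal interval.

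Given this, the algorithm is to compute the synchronizable pairs once via $A^2$, then, for every synchronizable pair $(a,b)$, form $C(a,b)$ and test whether it is synchronizing using Theorem~\ref{tm-pairwise} (equivalently Corollary~\ref{cor-check-set}); among all pairs for which $C(a,b)$ is synchronizing, I would output one of maximum cardinality. Correctness follows because every synchronizing set $S$ is contained in the maximal interval extending $[\min S,\max S]$, whose extreme states form a synchronizable pair $(a,b)$ with $|C(a,b)|\ge |S|$, while conversely every kept set $C(a,b)$ is genuinely synchronizing and hence no larger than the true optimum. For the running time, precomputing the synchronizable pairs by one breadth-first search in $A^2$ costs $O(|Q|^2\cdot|\Sigma|)$ time and space; there are $O(|Q|^2)$ candidate pairs, forming the sets $C(a,b)$ through reachability queries in $A^2$ contributes the $O(|Q|^5)$ term, and verifying each $C(a,b)$ with Corollary~\ref{cor-check-set} at cost $O(|Q|^2\cdot|\Sigma|)$ per candidate contributes the $O(|Q|^4\cdot|\Sigma|)$ term, all within $O(|Q|^2\cdot|\Sigma|)$ space.

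The main obstacle, and the only place where monotonicity is genuinely used, is the claim $C(a,b)=[a,b]$ for the extreme states of a maximal interval: this is precisely what lets us recover the interval structure from the synchronizability relation alone, without ever computing the order, and it is where one must combine maximality with the interval-synchronization property of Theorem~\ref{tm-pairwise}. Everything else is bookkeeping over the $O(|Q|^2)$ candidate pairs.
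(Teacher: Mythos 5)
Your proof is correct, but it takes a genuinely different route from the paper's. The paper's algorithm is word-based: for each synchronizable pair $(q_i,q_j)$ it computes an explicit shortest word $w$ synchronizing the pair, applies $w$ to every state, and collects the states that land on $\delta(q_i,w)$; correctness rests on the observation that a word synchronizing the extremes of the maximum interval synchronizes exactly that interval. Your algorithm is purely relational: you precompute the synchronizability relation on pairs once, and recover each maximal synchronizing interval $[a,b]$ as $C(a,b)=\{q : \{a,q\} \text{ and } \{b,q\} \text{ synchronizable}\}$, with a correct maximality argument (a $q$ outside $[a,b]$ synchronizable with the far endpoint would yield a strictly larger synchronizing interval via Theorem~\ref{tm-pairwise}). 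Both approaches use the interval structure of synchronizing sets in an essential way, but yours avoids manipulating words of quadratic length altogether; once the pair relation is tabulated, forming and verifying all candidates costs only $O(|Q|^4)$ plus the one-time $O(|Q|^2\cdot|\Sigma|)$ construction, so your method actually beats the stated $O(|Q|^4\cdot|\Sigma|+|Q|^5)$ bound (your own accounting of an $O(|Q|^5)$ term for forming the $C(a,b)$ is overly pessimistic given the precomputation). The trade-off is that the paper's version outputs a synchronizing word for the maximum set as a byproduct, whereas yours certifies the set only through the pair relation and would need an extra pass to produce a witness word.
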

\begin{proof}
	For each synchronizing pair $q_i, q_j$ of states, find a word synchronizing this pair (in the same way as described in Corollary \ref{cor-check-set}), and the find all states that are mapped by this word to the same state as  $q_i$ and $q_j$. Output the pair with the largest synchronizing set constructed in such a way.
	
	To prove that this algorithm is correct, observe that each word synchronizing a pair $q_i, q_j$ of states synchronizes also all the  states $q_k$, $q_i < q_k < q_j$. On the other hand, if $q_i = \min S$ and $q_j = \max S$ for a synchronizing set $S$ of maximum size (which is an interval), any word synchronizing $q_i$ and $q_j$ synchronizes only $S$.
	
	The described algorithm requires $O(|Q|^2 \cdot (|Q|^2 \cdot |\Sigma| + |Q|^3))$ (for each pair of states, we need to find a synchronizing word which requires $O(|Q|^2 \cdot |\Sigma|)$ time, and then apply this word to each state, which requires $O(|Q|^3))$ time. Since we need to store only the set of maximum size, the algorithm requires $O(|Q|^2 \cdot |\Sigma|)$ space.
\end{proof}

The problem of finding a synchronizing subset of states of maximum size in general automata is PSPACE-complete \cite{Ryzhikov2017}. T\"{u}rker and Yenig\"{u}n \cite{Turker2015} study a variation of this problem, which is to find a set of states of maximum size that can be mapped by some word to a subset of a given set of states in a given monotonic automaton. They reduce the {\sc N-Queens Puzzle} problem \cite{Bell2009} to this problem to prove its NP-hardness. However, their proof is unclear, since in the presented reduction  the input has size $O(\log N)$, and the output size is polynomial in $N$.

The algorithms proposed in this section run polynomial time, but the degrees of these polynomials are quite high. A natural question is to find faster algorithms for the described problems. Another interesting quiestion is whether the results can be generalized to oriented automata. 

\comment{
\section{Orientable Automata} \label{sec-orientable}

In this section let $A = (Q, \Sigma, \delta)$ be an oriented automaton.

\begin{theorem}\label{tm-orientable}
	Let $S \subseteq Q$ be a subset of states of $A$. Then $S$ is synchronizing if and only if any three states in $S$ can be synchronized.
\end{theorem}
\begin{proof} Assume that any two states in $S$ can be synchronized, then let us consider the states $q_{min}$ and $q_{max}$ such that there is no elements of $S$ between them ($\not \exists q \in S$ such that $q_{max} \le q \le q_{min}$ where $\le$ is cyclic order). Then it is easy to see that $\forall q \in S$ there holds $q_{min} \le q \le q_{max}$. Then it follows that if there is a word that synchronizes $q_{min}, q_0, q_{max}, q_0 \in S$ then this word synchronizes the whole set $S$. Indeed let $\delta(q_{min}, w) = \delta(q_0, w) = \delta(q_{max}, w) = q'$ then for any $q\in Q$ either $q_{min} \le q \le q_0$ or $q_0 \le q \le q_{max}$. In both cases by the definition of orientable automata $\delta(q, w) = q'$ which completes the proof.	
\end{proof}

Note that in oriented automata it is possible for $S$ to be non synchronizing while every pair of states from $S$ can be synchronized. Consider the following example: $Q = \{q_1, q_2, q_3, q_4, q_5, q_6\}$, $\Sigma = \{0, 1, 2\}$ and the transition function is defined as follows: $\delta(q_1, 0) = q_2, \delta(q_3, 0) = q_2, \delta(q_3, 1) = q_4, \delta(q_5, 1) = q_4, \delta(q_1, 2) = q_6, \delta(q_5, 2) = q_6$, every other transition is defined as identity. For $S = \{q_1, q_3, q_5\}$ it is easy to see that every pair of states can be synchronized. Nevertheless $S$ can not be synchronizes because every letter maps $S$ to the set of two sink states.

\begin{corollary}
	The problem of checking whether a given set $S$ is synchronizing can be solved in polynomial time for oriented automata.
\end{corollary}
\begin{proof}
	By Theorem \ref{tm-orientable} it is enough to check that every three of states in $S$ can be synchronized, which can be done by solving the reachability problem in the subautomaton of the power automaton, build on all $3$-element, $2$-element and $1$-element subsets of $Q$ \cite{Volkov2008}. There are $\frac{|Q|(|Q|^2 + 5)}{6}$ states in subautomaton, so breadth-first search from every 3-element state of $S$ consumes $O(|Q^6||\Sigma|)$ time. By the reasoning similar to the monotonic case the algorithm requires $O(|Q|^3|\Sigma|)$ of working space.
\end{proof}

\begin{corollary}
	A shortest word synchronizing a given subset $S$ of states can be found in polynomial time for oriented automata.
\end{corollary}
\begin{proof}
	Consider the following algorithm. For every three states, find a shortest word synchronizing them. This can be done by solving the shortest path problem in the subautomaton of the power automaton, build on all $3$-element, $2$-element and $1$-element subsets of $Q$ \cite{Volkov2008}. Let $W$ be the set of all such words that synchronize $S$. Output the shortest word in $W$.
	
	By an argument similar to the proof of Theorem \ref{tm-orientable}, any shortest word synchronizing $\{q_{min}, q, q_{max}\}$ is a shortest word synchronizing $S$, thus the algorithm is optimal. Algorithm consumes $O(|Q^6||\Sigma|)$ time so it is polynomial. Similarly to the monotonic case the algorithm requires $O(|Q|^3|\Sigma|) + O(|Q|^5)$ working space.
\end{proof}

\begin{theorem}
	Synchronization graphs of monotonic automata are circular arc graphs.
\end{theorem}
}

\section{Lower Bounds for Synchronizing Words}\label{sec-bound}

The length of a shortest word synchronizing a $n$-state monotonic automaton is at most~$n - 1$ \cite{Ananichev2004}. In this section we investigate a more general question of bounding the length of a shortest word synchronizing a subset of states in a $n$-state synchronizing automaton. For a more general class of oriented automata a bound of $(n - 2)^2$ is known \cite{Eppstein1990}, but for monotonic automata a smaller upper bound can be proved.

\begin{theorem}\label{tm-sycn-upper}
	Let $S$ be a synchronizing set of states in a monotonic $n$-state automaton~$A$. Then for $n \ge 8$ the length of a shortest word synchronizing $S$ is at most $\frac{(n - 2)^2}{4}$.
\end{theorem}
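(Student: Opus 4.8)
The plan is to reduce the statement to bounding the length of a shortest word synchronizing a single \emph{pair} of states, and then to bound that length by a counting argument over the ``gaps'' of the two tracked images, closed off by an AM--GM balancing that produces the factor $\tfrac14$.

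First I would invoke Theorem~\ref{tm-pairwise} together with the reduction already used in its corollaries: a shortest word synchronizing $S$ is exactly a shortest word synchronizing the pair $\{q_\ell, q_r\}$, where $q_\ell = \min S$ and $q_r = \max S$, since by monotonicity any word collapsing this pair collapses the whole interval $[q_\ell, q_r] \supseteq S$. So it suffices to bound the length of a shortest word synchronizing an arbitrary pair. Fix such a shortest word $w = x_1 \cdots x_m$ and track the two images $a_i = \delta(q_\ell, x_1\cdots x_i)$ and $b_i = \delta(q_r, x_1\cdots x_i)$. Monotonicity gives $a_i \le b_i$ for every $i$, the pair reaches the diagonal only at step $m$, and minimality forces all pairs $(a_i, b_i)$ to be pairwise distinct (a repeated pair would let us delete an infix and shorten $w$). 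Thus $m+1$ is at most the number of distinct off-diagonal pairs the trajectory can occupy, and the whole task becomes a purely combinatorial bound on this count.

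Then I group the visited pairs by their gap $d_i = b_i - a_i \in \{1, \ldots, n-1\}$ and bound, for each fixed value $d$, the number of visited pairs of that gap. One bound is immediate: only $n-d$ pairs have gap $d$ at all, so at most $n-d$ are used. The heart of the argument is a complementary bound of order $d$ on the number of visited gap-$d$ pairs, exploiting that $w$ is shortest and that order-preserving transitions sharply constrain how the two tokens can repeatedly return to the same gap at shifted positions. Combining the two bounds as $\min(d, n-d)$ and summing over $d$ yields a total of order $\tfrac{n^2}{4}$; a careful accounting that discards the terminal diagonal pair and treats the two extreme states $q_1, q_n$ separately sharpens this to $\tfrac{(n-2)^2}{4}$. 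Equivalently one can phrase the optimization as splitting a budget of $n-2$ usable states between ``how far the tokens spread'' and ``how many positions they occupy at each spread,'' where AM--GM caps the product at $\bigl(\tfrac{n-2}{2}\bigr)^2$; the hypothesis $n \ge 8$ is precisely what makes this balanced optimum (and the attendant floor for odd $n$) dominate the cruder small-$n$ estimates.

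The main obstacle is the complementary per-gap bound. A monotone map can \emph{strictly increase} the separation of two tokens, so the gap sequence need not be monotone --- this is exactly why the answer is quadratic rather than linear --- and I must rule out a trajectory that visits nearly all $n-d$ positions at every gap at once. I would attack this by analyzing, between two consecutive visits to gap $d$, the monotone map induced by the intervening infix: on the relevant interval such a map acts as a shift, and shortest-word minimality should forbid too many such shifts before the gap is finally closed at the merge state $q^{*} = a_m = b_m$. A secondary point, needed for asymptotic tightness even though the theorem states only the upper bound, is a matching ``two-register'' monotonic automaton in which the lower token must exhaust $\approx n/2$ positions while the upper token independently exhausts the remaining $\approx n/2$, forcing length $\approx \tfrac{(n-2)^2}{4}$ and confirming that no constant better than $\tfrac14$ is attainable.
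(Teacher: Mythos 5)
Your setup matches the paper's: reduce to synchronizing the pair $\{\min S, \max S\}$, track the two images $(a_k,b_k)$, use monotonicity to get $a_k\le b_k$, and use minimality to get that all visited pairs are distinct, so the length is bounded by a count of admissible off-diagonal pairs. But the proof then hinges entirely on the "complementary per-gap bound of order $d$," which you explicitly identify as the main obstacle and do not establish. The sketch you offer for it does not work as stated: shortest-word minimality alone cannot "forbid too many shifts," because the paper's own lower-bound construction (Theorem~\ref{tm-sync-lower3}) produces a shortest word whose trajectory legitimately revisits the same gap $\Theta(d)$ times; so you need a quantitative structural constraint, not a forbiddance argument, and nothing in your proposal supplies one. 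This is a genuine gap.

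The ingredient you are missing is a \emph{separator} (straddling) property, which is how the paper closes the count. After two preliminary reductions that you also omit (if $S$ can be synchronized to a state outside $[\min S,\max S]$, a monotone descent of length at most $n-1$ already suffices; and the set of possible merge states is an interval $[q_i,q_j]$ from which no outside state is reachable, with the worst case being $j=i$, i.e.\ a single sink merge state $q^\ast$), one observes that $q^\ast$ lies between $\min S$ and $\max S$, is fixed by every prefix of $w$, and hence by monotonicity satisfies $a_k\le q^\ast\le b_k$ for every $k$. This immediately bounds the number of distinct admissible pairs by $(\#\{q\le q^\ast\})\cdot(\#\{q\ge q^\ast\})$, which after discarding the extreme states is at most $\frac{(n-2)^2}{4}$ by the AM--GM balancing you had in mind. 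Your per-gap decomposition would in fact follow from this same separator property (pairs of gap $d$ straddling $q^\ast$ number at most $d+1$, giving your $\min(d+1,\,n-d)$), but at that point the direct product count is simpler; without the separator property your decomposition has no engine. The final paragraph about a matching lower-bound automaton is not needed for the statement, which asserts only the upper bound.
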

\begin{proof}
	Let $A = (Q, \Sigma, \delta)$, and $\{q_1, \ldots, q_n\}$ be an order of the states preserved by all transitions of $A$. Define $q_\ell = \min S$, $q_r = \max S$. We can assume that $S$ can be mapped only to the states in $[q_\ell, q_r]$. Indeed, assume without loss of generality that $q_i$, $i < \ell$, is a state such that $S$ can be mapped to $q_i$, and it is the smallest such state. Then by monotonicity there exists a word mapping $q_r$ to $q_i$ by taking only transitions going to smaller states. This word then synchronizes $S$ and has length at most $n - 1 \le \frac{(n - 2)^2}{4}$ for $n \ge 8$.
	
	
	
	Now we can assume that $S$ can be mapped only to states in $[q_\ell, q_r]$. This means that $S$ can be mapped to the states $q_i, q_j$ in $[q_\ell, q_r]$, and no state outside $[q_i, q_j]$ is reachable from any state of $[q_i, q_j]$. Indeed, the set of states reachable from both $q_\ell, q_r$ contains $q_i$ and $q_j$, and if some state outside $[q_i, q_j]$ is reachable from $[q_i, q_j]$ we can synchronize $S$ to a state outside $[q_i, q_j]$, which contradicts the definition of the interval $[q_i, q_j]$. If both $q_\ell$ and $q_r$ are mapped to states inside $[q_i, q_j]$, they can be then synchronized by applying a word of length at most $j - i$, for example by applying a word $w'$ composed of only letters mapping the consecutive images of $q_j$ to states with smaller indexes by the same reasoning as below. By our assumptions, $q_i$ is reachable from each state in $[q_i, q_j]$, thus such a word $w'$ exists.
 	
 	Suppose now that $w = w_1 \ldots w_m$ is a shortest word synchronizing $S$. Consider the sequence of pairs $(t_k, s_k) = (\delta(q_\ell, w_1 \ldots w_k), \delta(q_r, w_1 \ldots w_k))$, $k = 1, 2, \ldots, m$. As $w$ is a shortest word synchronizing $S$, and synchronization of $S$ is equivalent to synchronization of $\{q_\ell, q_r\}$, no pair appears in this sequence twice, and the only pair with equal components is $(s_m, t_m)$. Because of monotonicity, $t_k \le s_k$ for each $1 \le k \le m$. Thus, the maximum length of $w$ is reached when $q_i = q_j$, since after both images are in $[q_i, q_j]$ the remaining length of a synchronizing word is at most $j - i$. Observe that if $q_1$ or $q_n$ is in $S$, $S$ again can be synchronized by a word of length $n - 1$. Thus, we can assume that $|S| \le n - 2$, and thus the length of $w$ is at most $(i - 1)(n - 3 - i) \le \frac{(n - 2)^2}{4}$.	
\end{proof}

	The bound is almost tight for monotonic automata over a three-letter alphabet as shown by the following example.
	
\begin{theorem} \label{tm-sync-lower3}
	For each $m \ge 1$, there exist a $(2m + 3)$-state monotonic automaton $A$ over a three-letter alphabet, which has a subset $S$ of states, such that the length of a shortest word synchronizing $S$ is $m^2 + m$.
\end{theorem}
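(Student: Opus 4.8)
The plan is to pin down a single family of $(2m+3)$-state automata together with a set $S$ and to prove matching upper and lower bounds of $m^2+m$ on the length of a shortest word synchronizing $S$. Note first that $m^2+m=\lfloor (2m+1)^2/4\rfloor=\lfloor (n-2)^2/4\rfloor$ for $n=2m+3$, so such an example is tight for Theorem~\ref{tm-sycn-upper} up to rounding. I would arrange the states in a linear order made of a lower ladder $\ell_0<\dots<\ell_m$, an upper ladder $u_0<\dots<u_m$ lying entirely above it, and one rendezvous state, and take $S=\{\ell_0,u_m\}$. By Theorem~\ref{tm-pairwise}, and by the same reasoning that shows a shortest word for $S$ is a shortest word for $\{\min S,\max S\}$, the whole problem becomes one about the two tokens $t_k=\delta(\ell_0,w_1\cdots w_k)$ and $s_k=\delta(u_m,w_1\cdots w_k)$, which monotonicity forces to obey $t_k\le s_k$ throughout. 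The three letters are order-preserving maps of the whole line: two of them, $a$ and $c$, act as saturating shifts that sweep the lower token up and, respectively, down its ladder while leaving the upper token pinned at a fixed point of that letter, and the third, $b$, walks the upper token one rung toward the rendezvous. The decisive structural feature is that \emph{every} letter is built to have two separate basins, one in each ladder, so that no single letter, however often iterated, can drive both tokens to a common state; the coupling that remains is precisely the order constraint $t_k\le s_k$, which keeps the upper token from descending past the lower one.

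For the upper bound I would exhibit one synchronizing word and compute its length directly. It has a zigzag shape of $m$ consecutive blocks, each block a full length-$m$ sweep of the lower token followed by one application of $b$, so its length is $m(m+1)=m^2+m$. Tracing the two tokens then shows that the alternately up- and down-directed sweeps keep the lower token oscillating along its ladder while the $b$'s walk the upper token down, the two meeting at the rendezvous exactly at the end of the last block. Checking that $a$, $b$, $c$ are order-preserving is routine, since a saturating shift and the single ``fold'' used by $b$ are manifestly monotone.

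The hard part is the matching lower bound: that no word of length less than $m^2+m$ synchronizes $S$. The genuine danger here is a linear-length shortcut that collapses the two tokens onto a common wall, and excluding it is exactly why the two-basin design and the constraint $t_k\le s_k$ are needed — indeed, a naive construction in which the sweeping letters and the advancing letter act on disjoint parts of the line admits such a shortcut, so the interaction must be engineered so that each descent of the upper token is only possible after the lower token has been re-swept. I would make this precise with a potential $\Phi(t,s)$ combining the number of rungs the upper token must still descend with the position of the lower token within its current sweep, show (case-checking the three letters) that the two-basin property and the order constraint forbid any letter from both advancing the upper token and repositioning the lower one, so that $\Phi$ drops by at most one per letter, and verify that $\Phi$ must drop by $m^2+m$ between the initial pair and the diagonal. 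Carrying this monovariant through every combination of letters, and in particular certifying that one advance of the upper token can never serve two different sweeps, is the principal obstacle and the step that actually forces the quadratic length.
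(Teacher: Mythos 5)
Your write-up is a plan rather than a proof, and the part you yourself flag as ``the principal obstacle'' is exactly the part that is missing: the transition function is never defined, so neither the upper nor the lower bound can be checked. Worse, the qualitative description you give cannot be realized as stated. You allocate all $2m+3$ states to two ladders of $m+1$ rungs plus one rendezvous state, and you claim that ``the coupling that remains is precisely the order constraint $t_k\le s_k$.'' That constraint is satisfied by \emph{every} monotonic automaton, including ones where $S$ synchronizes in $O(n)$ steps, so it cannot by itself force a quadratic bound; some letters must be genuinely \emph{fatal} at the wrong moment, not merely unhelpful. In the paper's construction this is achieved by reserving two absorbing failure sinks ($q_1$ below and $q_{2m+3}$ above) in addition to the middle rendezvous sink $q_{m+2}$: the set $\{q_2,q_{2m+2}\}$ can only meet at $q_{m+2}$ by monotonicity, the letter that advances the lower token sends every non-bottom state of the upper ladder to the upper failure sink (and resets the bottom state of the upper ladder back to its top), and the finishing letter sends every unfinished lower-ladder state to the lower failure sink. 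This makes exactly one letter admissible at each step, so the shortest word $1^{m-1}(01^{m-1})^m2$ of length $m^2+m$ is forced letter by letter --- no potential function is needed. Your state budget leaves no room for such failure sinks, and without them ``only possible after the lower token has been re-swept'' has no mechanism behind it.

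A second concrete problem is the oscillation. For the word $(\text{sweep up})\,b\,(\text{sweep down})\,b\cdots$ to be forced, the letter $b$ would have to demand the lower token at $\ell_m$ before odd-numbered applications and at $\ell_0$ before even-numbered ones; but $b$ is a single deterministic map on states, so its effect on $\ell_j$ is the same every time and cannot alternate its precondition. (There is also an off-by-one in your accounting: with $S=\{\ell_0,u_m\}$ and the rendezvous below $u_0$, the upper token needs $m+1$ descents but your word contains only $m$ occurrences of $b$.) The fix is essentially to adopt the paper's asymmetric design: one token advances by a single rung per ``expensive'' letter, the other token is the one that must traverse its whole ladder between consecutive expensive letters because the expensive letter resets it to the far end, and wrong choices are punished by absorbing failure sinks rather than by a potential that merely fails to decrease. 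Once the transitions are written down explicitly, the lower bound becomes a short forced-move argument and your potential-function machinery is unnecessary.
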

\begin{proof}
	Consider the following monotonic automaton $A = (Q, \Sigma, \delta)$, $Q = \{q_1, \ldots, q_{2m + 3}\}$. Let $\Sigma = \{0, 1, 2\}$. Let states $q_1$, $q_{m+2}$ and $q_{2m+3}$ be sink states. For every state $q_i$, $2 \le i \le m + 1$, we set $\delta(q_i, 0) = q_{i + 1}$, $\delta(q_i, 1) = q_i$, $\delta(q_i, 2) = q_1$. For every state $q_i$, $m + 4 \le i \le 2m + 2$, we set $\delta(q_i, 0) = q_{2m + 3}$, $\delta(q_i, 1) = q_{i - 1}$, $\delta(q_i, 2) = q_i$. Finally we define $\delta(q_{m+3},0) = q_{2m+2}$, $\delta(q_{m + 3}, 1) = q_{m+3}$, $\delta(q_{m + 3}, 2)=q_{m + 2}$.
	See Figure \ref{fig-subset-3} for an illustration of the construction.
	
	\setlength{\unitlength}{2pt}
	\begin{figure}[hbt]
		\begin{center}
			\begin{tikzpicture}[->,>=latex',
			vertex/.style={circle, draw=black, fill, minimum width=1.5mm, inner sep=0pt, outer sep=0pt},
			every label/.style={inner sep=0pt, minimum width=0pt, label distance=0.1mm},
			yscale=1,
			xscale=1.5
			]
			\graph[nodes=vertex, empty nodes, no placement] {
				{
					q2[x=0,y=0,label=below:$q_2$] -> [edge]
					q3[x=1,y=0,label=below:$q_3$] -> [edge label=$\cdots$, swap]
					qm1[x=2,y=0,label=below:$q_{m+1}$] -> [edge]
					qm2[x=3,y=0,label=below:$q_{m+2}$]
				};
				{ 	
					q2m2[x=7,y=0,label=below:$q_{2m+2}$] -> [edge label=$\ldots$, dashed]
					qm5[x=6,y=0,label=below:$q_{m+5}$] -> [edge, dashed]
					qm4[x=5,y=0,label=below:$q_{m+4}$] -> [edge, dashed]
					qm3[x=4,y=0,label=below:$q_{m+3}$] -> [edge, dotted]
					qm2
				};
				{
					q1[x=-1,y=0,label=below:$q_1$]
				};
				
				q2 -> [edge, bend right = 40, dotted] q1;
				q3 -> [edge, bend right = 60, dotted] q1;
				qm1 -> [edge, bend right = 80, dotted] q1;
				
				q2m3[x=8,y=0,label=below:$q_{2m+3}$] -> [edge, loop above] q2m3;
				
				qm3 -> [edge, bend left = 50] q2m2;
				qm4 -> [edge, bend left = 50] q2m3;
				qm5 -> [edge, bend left = 50] q2m3;
				
				q1 -> [edge, loop above] q1;
				qm2 -> [edge, loop above] qm2;
				q2m2 -> [edge, bend left = 50] q2m3;
				
			};
			\end{tikzpicture}
			\caption{The automaton providing a lower bound for subset synchronization in monotonic automata over a three-letter alphabet. Solid arrows represent transitions for the letter $0$, dashed -- for the letter $1$, dotted -- for the letter $2$. The states $q_1, q_{m + 2}, q_{2m + 3}$ are sink states, self-loops are omitted.} \label{fig-subset-3}
		\end{center}
	\end{figure}
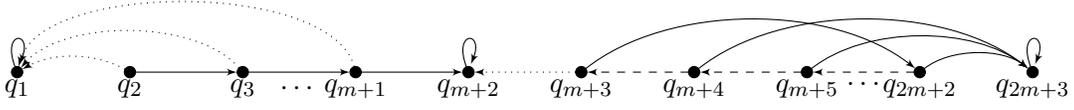
	
	All transitions of $A$ respect the order $q_1, \ldots, q_{2m + 3}$, so $A$ is monotonic. Let us show that the shortest word synchronizing the set $S = \{q_2, q_{2m+2}\}$ is $w = 1^{m-1}(01^{m-1})^m 2$. Let $S'$ be a set of states such that $q_i, q_j \in S'$, $2 \le i \le m + 1$, $m + 3 \le j \le 2m + 2$. The set $S'$ can be mapped only to $q_{m + 2}$, because $A$ is monotonic. Hence if any state of $S'$ is mapped by a word to $q_1$ or to $q_{2m + 3}$, then this word cannot synchronize $S'$.
	
	We start with the set $S = \{q_2, q_{2m+2}\}$. There is only one letter $1$ that does not map the state $q_2$ to $q_1$ or the state $q_{2m+2}$ to $q_{2m+3}$, and maps $S$ not to itself. Indeed, $0$ maps $q_{2m+2}$ to $q_{2m+3}$ and $2$ maps $q_2$ to $q_1$. Thus, any shortest synchronizing word can start only with $1$. Consider now the set $\{\delta(q_2, 1), \delta(q_{2m+2}, 1)\} = \{q_2, q_{2m+1}\}$. There is only letter $1$ that does not map the state $q_3$ to $q_1$, or $q_{2m + 1}$ to $q_{2m+3}$ and maps this set not to itself. Indeed, $0$ maps $q_{2m+1}$ to $q_{2m+3}$ and $2$ maps $q_3$ to $q_1$. So the second letter of the shortest synchronizing word can only be $1$. By a similar reasoning (at each step there is exactly one letter that maps a pair of states not to itself and does not map the states to the sink states $q_1$ and $q_{2m + 3}$), we deduce that any shortest synchronizing word has to begin with $1^{m-1}(01^{m-1})^m$ and it is easy too see that $1^{m-1}(01^{m-1})^m 2$ synchronizes $S$. Thus, $w$ is a shortest word synchronizing $S$, and its length is $m^2 + m$.
\end{proof}

For a $n$-state automaton, the lower bound on the length of a shortest word in this theorem is $\frac{(n - 2)^2 - 1}{4}$, which is very close to the lower bound $\frac{(n - 2)^2}{4}$ from Theorem \ref{tm-sycn-upper}.

By taking $q_2$ and $q_{2m+2}$ as initial states in two equal copies of the automaton in the proof of Theorem \ref{tm-sync-lower3}, and taking $q_{m + 2}$ as the only accepting state in both copies, we obtain the following result.

\begin{corollary}
	A shortest word accepted by two $(2m + 3)$-state monotonic automata which differ only by their initial states can have length $m^2 + m$.
\end{corollary}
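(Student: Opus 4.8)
The plan is to reduce the question of a shortest commonly accepted word directly to the subset-synchronization lower bound already established in Theorem~\ref{tm-sync-lower3}. I would take two identical copies $A_1$ and $A_2$ of the automaton $A$ built there, assign $A_1$ the initial state $q_2$ and $A_2$ the initial state $q_{2m+2}$, and declare $q_{m+2}$ to be the sole accepting state in both. The guiding observation is that a word $w$ is accepted by both automata precisely when $\delta(q_2, w) = q_{m+2}$ and $\delta(q_{2m+2}, w) = q_{m+2}$ hold simultaneously, i.e.\ exactly when $w$ maps the set $S = \{q_2, q_{2m+2}\}$ onto the single state $q_{m+2}$. Thus the shortest commonly accepted word is nothing other than a shortest word collapsing $S$ at $q_{m+2}$.

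First I would fix the two automata and record that the set of words accepted by both equals $\{w : \delta(q_2, w) = \delta(q_{2m+2}, w) = q_{m+2}\}$; since $A_1$ and $A_2$ are copies of the monotonic automaton $A$, both are monotonic and they differ only in their initial states, as the statement demands. Next I would show that this set coincides with the set of words synchronizing $S$. One inclusion is immediate, as any word sending both $q_2$ and $q_{2m+2}$ to $q_{m+2}$ certainly synchronizes $S$. For the converse I would invoke the structural fact already used inside the proof of Theorem~\ref{tm-sync-lower3}: because $A$ is monotonic and $q_{m+2}$ is a sink lying strictly between the two blocks $\{q_2,\ldots,q_{m+1}\}$ and $\{q_{m+3},\ldots,q_{2m+2}\}$, a \emph{straddling} pair such as $S$ can be collapsed only at $q_{m+2}$. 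Hence the two sets of words are equal, and a shortest word accepted by both automata is exactly a shortest word synchronizing $S$.

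Finally, I would apply Theorem~\ref{tm-sync-lower3} itself, which asserts that a shortest word synchronizing $S = \{q_2, q_{2m+2}\}$ has length $m^2 + m$ (realized by $1^{m-1}(01^{m-1})^m 2$), and conclude that a shortest word accepted by both $(2m+3)$-state monotonic automata has length $m^2 + m$. In particular the intersection of the two accepted languages is nonempty, so the bound is attained and not vacuous.

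The one delicate point, and the step I would be most careful about, is the converse inclusion: establishing that $q_{m+2}$ is the \emph{unique} state to which $S$ can be synchronized, so that ``accepted by both automata'' and ``synchronizes $S$'' genuinely coincide rather than one merely implying the other. This is precisely the content of the monotonicity argument applied to the straddling set $S'$ in the proof of Theorem~\ref{tm-sync-lower3}, so no new combinatorics is required beyond citing it; everything else is routine bookkeeping about acceptance and language intersection.
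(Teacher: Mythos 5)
Your proposal is correct and is exactly the paper's construction: two copies of the automaton from Theorem~\ref{tm-sync-lower3} with initial states $q_2$ and $q_{2m+2}$ and sole accepting state $q_{m+2}$, so that common acceptance coincides with synchronizing $S=\{q_2,q_{2m+2}\}$. You merely spell out the one point the paper leaves implicit, namely that monotonicity and the sink $q_{m+2}$ lying between $\min S$ and $\max S$ force any synchronizing word to collapse $S$ at $q_{m+2}$, which is indeed already established inside the proof of Theorem~\ref{tm-sync-lower3}.
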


For binary monotonic automata, our lower bound is slightly smaller, but still quadratic.

\begin{theorem}
	For each $m \ge 1$, there exist a $(4m + 3)$-state binary monotonic automaton~$A$, which has a subset $S$ of states such that the length of a shortest word synchronizing~$S$ is at least $m^2$.
\end{theorem}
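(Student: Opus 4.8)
The plan is to build a binary analogue of the three-letter automaton of Theorem~\ref{tm-sync-lower3}, keeping its ``$\Theta(m)$ rounds, each of length $\Theta(m)$'' behaviour but folding its three letters into two; the cost of this folding is the doubling of the chains (hence $4m+3$ instead of $2m+3$ states) and the slightly weaker constant $m^2$. By Theorem~\ref{tm-pairwise} and the optimality remark in the corollaries following it, a shortest word synchronising $S$ coincides with a shortest word synchronising $\{\min S,\max S\}$, so I would take $|S|=2$ and argue entirely about the motion of this pair in the power automaton. I would fix the order so that the states split into a lower chain and an upper chain of $2m$ states each, meeting at a central ``target'' state, together with two extremal sink states; the two markers start at the two ends, i.e.\ $S$ consists of the bottom of the lower chain and the top of the upper chain.

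Next I would choose the two letters $0,1$ so that, along the intended optimal word, letter $1$ moves the upper marker one step towards the target while fixing the lower marker, and letter $0$ moves the lower marker one step towards the target while \emph{resetting} the upper marker to the top of its chain (and sending every non-bottom upper state into the top sink). With these transitions the lower marker can be advanced only by a letter that simultaneously destroys all the progress of the upper marker, so each of the $\Theta(m)$ advances of the lower marker must be preceded by $\Theta(m)$ letters $1$ that bring the upper marker down again. This is precisely the mechanism that forces a word of the shape $(1^{\Theta(m)}0)^{\Theta(m)}$ of length $\Theta(m^2)$, and, unlike the ternary case, it uses only two letters; all transitions can be made to respect the chosen order, so the automaton is monotonic. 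Checking monotonicity and that this particular word does synchronise $S$ is routine.

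The hard part is to show that no genuinely shorter word synchronises $S$, and this is exactly where the two sinks and the doubled chains enter. In the ternary automaton the third letter merged the upper marker into the target while killing the lower marker unless it already sat on the target, which forbade the cheap strategy ``first bring the upper marker onto the target, then raise the lower marker for free''; with only two letters this gate has to be recreated structurally. I would therefore make the two sinks \emph{one-sided} — the lower marker can never reach the sink that absorbs the upper marker and vice versa — so that the only state reachable by both markers is the target; this already rules out the most dangerous shortcut, namely flooding all of $S$ into some absorbing state, which one must check first because monotone maps reach absorbing states quickly. The doubling is then used so that, by monotonicity, bringing the lower marker to the target forces it through its entire chain, while an early merge of the upper marker is either undone by the next application of $0$ or pushes a marker into a one-sided sink. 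The cleanest way to finish is probably a potential-function argument: exhibit a $\Phi$ on the pairs still capable of being synchronised with $\Phi \ge m^2$ at the start, $\Phi = 0$ only at synchronisation, and $\Phi$ decreasing by at most $1$ under each letter; then every synchronising word has length at least $m^2$. Designing $\Phi$ (equivalently, a rigorous case analysis of the power automaton showing every move makes at most unit progress) is the step I expect to be the main obstacle, since it is what pins down the exact transitions and the exact loss in the constant.
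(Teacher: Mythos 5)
Your high-level picture matches the paper's construction: two chains of length $2m$ separated by a central sink target, two extremal sinks, and a forced word of shape $(1^{\Theta(m)}0)^{\Theta(m)}$. But the specific transition structure you propose cannot be made to work, and the obstacle is not merely the missing potential function: with your letters, a short synchronizing word provably exists. The problem is that your letter $1$ \emph{fixes} the lower marker while walking the upper marker toward the target. Consider the target state $c$. If $c$ is a sink, the word $1^{2m}0^{2m}$ synchronizes: first park the upper marker at $c$ for free (the lower marker does not move), then advance the lower marker with $0$'s while $c$ absorbs. If instead $\delta(c,0)\neq c$, monotonicity forces $\delta(c,0)>c$ (since $\delta(\ell,0)=c$ for the top $\ell$ of the lower chain), so $\delta(c,0)$ lies in the upper region; but under your letter $0$ the upper region drains into the top sink, so the lower marker can reach that sink via $c$ in $O(m)$ steps and $0^{O(m)}$ synchronizes both markers there. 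This also shows your two requirements --- ``the sinks are one-sided'' and ``an early merge at $c$ is undone by the next $0$'' --- are mutually incompatible under monotonicity. So no potential function with your properties can exist for this automaton.

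The paper escapes this by \emph{coupling} the two markers: its letter $1$ shifts \emph{both} markers down by one, and the lower marker has only about $m$ states of room before falling into the bottom sink, while the upper marker needs $2m$ steps of $1$ to descend to the target; hence the cheap strategy ``walk the upper marker to the target first'' kills the lower marker. To make net progress possible despite this drift, the paper's letter $0$ does not advance the lower marker by one step but jumps it up by $m$ positions while jumping the upper marker up by $m-1$ (sending out-of-range states to the one-sided extremal sinks), so that each $1^{m-1}0$ round closes the gap between the markers by exactly one. The lower bound then falls out of two monotone quantities: the index gap decreases only under $0$ (by one per application), and each $0$ pushes the upper marker $m-1$ states away from the window in which $0$ is again survivable, costing $m-1$ letters $1$. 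If you rebuild your automaton with this coupled drift and the asymmetric jumps, your outline goes through; without it, the construction admits linear-length shortcuts.
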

\begin{proof} 
	Consider the following automaton $A = (Q, \Sigma, \delta)$ with $Q = \{q_1, \ldots, q_{4m + 3}\}$, $\Sigma = \{0,1\}$. Define $\delta$ as follows. Set $q_1, q_{2m + 2}, q_{4m + 3}$ to be sink states. Define $\delta(q_i, 1) = q_{i - 1}$ for all $i \not = 1, 2m + 2, 4m + 3$. For each $i$, $2 \le i \le m + 1$, define $\delta(q_i, 0) = q_{i + m}$, and for each $i$, $m + 2 \le i \le 2m + 1$, define $\delta(q_i, 0) = q_{2m + 2}$. For each $i$, $2m + 3 \le i \le 3m + 3$, define $\delta(q_i, 0) = q_{m + i - 1}$, and for each $i$,  $3m + 4 \le i\le 4m + 2$, define $\delta(q_i, 0) = q_{4m + 3}$. The defined binary automaton is monotonic, since all its transitions respect the order $q_1, \ldots, q_{4m + 3}$. See Figure \ref{fig-subset-2} for an example of the construction.
	
	\setlength{\unitlength}{2pt}
	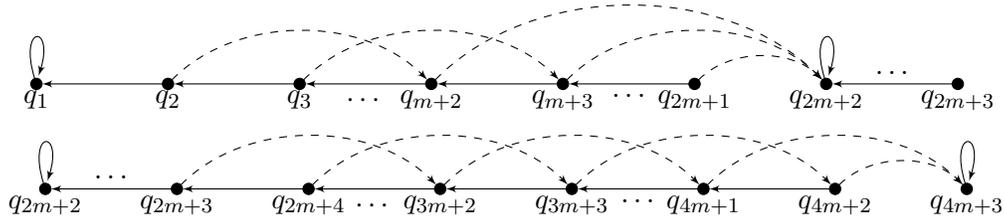
\begin{figure}[hbt]
		\begin{center}
			\begin{tikzpicture}[->,>=latex',
			vertex/.style={circle, draw=black, fill, minimum width=1.5mm, inner sep=0pt, outer sep=0pt},
			every label/.style={inner sep=0pt, minimum width=0pt, label distance=0.1mm},
			yscale=1.5,
			xscale=1.75
			]
			\graph[nodes=vertex, empty nodes, no placement] {

				q1[x=-2,y=0,label=below:$q_1$] -> [edge, loop above] q1;
				
				q2m2[x=4,y=0,label=below:$q_{2m + 2}$];
				
				q2m1[x=3,y=0,label=below:$q_{2m + 1}$] -> [edge label=$\ldots$]
				qm3[x=2,y=0,label=below:$q_{m + 3}$] ->
				qm2[x=1,y=0,label=below:$q_{m + 2}$] -> [edge label=$\cdots$]
				q3[x=0,y=0,label=below:$q_3$] ->
				q2[x=-1,y=0,label=below:$q_2$] -> q1;
				
				q2m3[x=5,y=0,label=below:$q_{2m + 3}$] -> [edge label={$\ldots$}, swap] q2m2;

				q2 -> [edge, bend left = 50, dashed] qm2;
				q3 -> [edge, bend left = 50, dashed] qm3;
				qm2 -> [edge, bend left = 50, dashed] q2m2;
				qm3 -> [edge, bend left = 50, dashed] q2m2;
				q2m1 -> [edge, bend left = 50, dashed] q2m2;
				
				q2m2 -> [edge, loop above] q2m2;
			};
			\end{tikzpicture}
			\begin{tikzpicture}[->,>=latex',
			vertex/.style={circle, draw=black, fill, minimum width=1.5mm, inner sep=0pt, outer sep=0pt},
			every label/.style={inner sep=0pt, minimum width=0pt, label distance=0.1mm},
			yscale=1.5,
			xscale=1.75
			]
			\graph[nodes=vertex, empty nodes, no placement] {
				
				q2m2[x=-2,y=0,label=below:$q_{2m + 2}$] -> [edge, loop above] q2m2;
				
				q4m3[x=5,y=0,label=below:$q_{4m + 3}$];
				
				q4m2[x=4,y=0,label=below:$q_{4m + 2}$] -> 
				q4m1[x=3,y=0,label=below:$q_{4m + 1}$] -> [edge label=$\ldots$]
				q3m3[x=2,y=0,label=below:$q_{3m + 3}$] ->
				q3m2[x=1,y=0,label=below:$q_{3m + 2}$] -> [edge label=$\cdots$]
				q2m4[x=0,y=0,label=below:$q_{2m + 4}$] ->
				q2m3[x=-1,y=0,label=below:$q_{2m + 3}$] ->  [edge label=$\ldots$, swap] q2m2;
				
				q2m3 -> [edge, bend left = 50, dashed] q3m2;
				q2m4 -> [edge, bend left = 50, dashed] q3m3;
				q3m2 -> [edge, bend left = 50, dashed] q4m1;
				q3m3 -> [edge, bend left = 50, dashed] q4m2;
				q4m1 -> [edge, bend left = 50, dashed] q4m3;
				q4m2 -> [edge, bend left = 50, dashed] q4m3;
				
				q4m3 -> [edge, loop above] q4m3;
			};
			\end{tikzpicture}
			\caption{The automaton providing a lower bound for subset synchronization in binary monotonic automata. Dashed arrows represent transitions for the letter $0$, solid -- for the letter $1$. The states $q_1, q_{2m + 2}, q_{4m + 3}$ are sink states. The picture is divided into two parts because of its width.} \label{fig-subset-2}
		\end{center}
	\end{figure}
	
	Define $S = \{q_{m + 2}, q_{4m + 2}\}$. Let us prove that a shortest word synchronizing $S$ has length at least $m^2$. 
	
	The set $S$ can only be mapped to $q_{2m + 2}$, since it is a sink state between $\min S$ and $\max S$. Thus, no word synchronizing $S$ maps any its state to $q_1$ or $q_{4m + 3}$. Consider now an interval $[q_i, q_j]$ for $2 \le i \le m + 1$, $2m + 3 \le j \le 4m + 2$ and note that applying $0$ reduces its length by $1$ (or maps its right end to $q_{4m + 3}$), and applying $1$ maps its ends to the ends of another interval of this form with the same length (or maps its left end to $q_1$). The maximal length of a segment of this form that allows its left end to be mapped to $q_{2m + 2}$ is $2m + 1$, so before any end of the interval is mapped to $q_{2m + 2}$, the letter $0$ has to be applied at least $m$ times. Each application of $0$ moves the right end of the intervals $m - 1$ states to the right, so each application of $0$ requires $m - 1$ applications of $1$ so that $0$ can be applied one more time. Thus, the word mapping $S$ to $q_{2m + 2}$ has length at least $m^2$. Note that $S$ can be synchronized by a word $w = (1^{m - 1} 0)^m 1^{2m}$ of length $|w| = m^2 + 2m$.
\end{proof}

For a $n$-state binary monotonic automaton we get a lower bound of $\frac{(n - 3)^2}{16}$ from this theorem.

By removing the first and the last state (and leaving all transitions to the removed states undefined) in the automata in the both series, we get the following results.

\begin{corollary}
	For infinitely many $n$, there exists a $n$-state monotonic partial automaton over a three-letter alphabet with shortest carefully synchronizing word of length at least $\frac{(n-1)^2}{4} + 1$.
\end{corollary}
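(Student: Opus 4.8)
The plan is to recycle the $(2m+3)$-state three-letter automaton $A$ of Theorem~\ref{tm-sync-lower3} and convert its subset-synchronization lower bound into a careful-synchronization lower bound, exactly as the preceding sentence suggests. I would form the partial automaton $A'$ by deleting the two extreme sink states $q_1$ and $q_{2m+3}$ and leaving every transition that pointed into them undefined; all surviving transitions are inherited from $A$. Then $A'$ has $n = 2m+1$ states $q_2, \ldots, q_{2m+2}$, it is monotonic (a restriction of a monotonic automaton respects the same order on its defined transitions), and $\frac{(n-1)^2}{4}+1 = m^2+1$, so the target bound reads $\ge m^2+1$.

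For the lower bound I would observe that a word carefully synchronizes $A'$ only if it stays inside the defined transitions throughout, and a transition of $A'$ is undefined precisely when the corresponding transition of $A$ leads into $q_1$ or $q_{2m+3}$. Hence a carefully synchronizing word $u$, acting on the extreme states $q_2 = \min A'$ and $q_{2m+2} = \max A'$, synchronizes the pair $\{q_2, q_{2m+2}\}$ inside $A$ without ever mapping an image to $q_1$ or $q_{2m+3}$. But that is exactly the behaviour forced on any word synchronizing $S = \{q_2, q_{2m+2}\}$ in $A$, whose shortest length is $m^2+m$ by Theorem~\ref{tm-sync-lower3}; therefore $|u| \ge m^2+m \ge m^2+1 = \frac{(n-1)^2}{4}+1$.

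To make the statement meaningful I also need that $A'$ is carefully synchronizing, so that a shortest carefully synchronizing word exists. Here I would take the word $w = 1^{m-1}(01^{m-1})^m 2$ from Theorem~\ref{tm-sync-lower3} and argue it carefully synchronizes $A'$: applied to $\{q_2, q_{2m+2}\}$ it never leaves the interval $[q_2, q_{2m+2}]$ and ends in the middle sink $q_{m+2}$, and for every intermediate state $q_k$ and every prefix $w'$, monotonicity gives $\delta(q_2, w') \le \delta(q_k, w') \le \delta(q_{2m+2}, w')$, so each $\delta(q_k, w')$ is also squeezed into $[q_2, q_{2m+2}]$ and collapses to $q_{m+2}$ at the end.

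The step I expect to be the main obstacle is verifying that this squeezing preserves \emph{definedness}, i.e.\ that whenever both endpoints admit the next letter, the trapped intermediate state does too; this is not automatic from the partial-monotonicity definition. I would settle it by checking directly that in $A'$ the set of states on which each letter is defined is a down-closed or up-closed block of the order: letter $0$ is undefined only on the top states $q_{m+4}, \ldots, q_{2m+2}$, letter $2$ only on the bottom states $q_2, \ldots, q_{m+1}$, and letter $1$ is defined everywhere. This monotone shape of the defined regions is precisely what lets the interval-squeezing argument certify, step by step, that the next transition is defined for every intermediate state, so $w$ indeed carefully synchronizes $A'$ and the corollary follows for the infinitely many values $n = 2m+1$.
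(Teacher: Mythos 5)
Your proposal is correct and follows exactly the paper's (one-sentence) construction: delete $q_1$ and $q_{2m+3}$ from the automaton of Theorem~\ref{tm-sync-lower3}, observe that a carefully synchronizing word for the resulting $(2m+1)$-state partial automaton must synchronize $\{q_2,q_{2m+2}\}$ in the original automaton while avoiding the deleted sinks, and hence has length at least $m^2+m \ge m^2+1 = \frac{(n-1)^2}{4}+1$. You additionally verify, via the down-/up-closedness of the domains of $0$ and $2$, that the truncated automaton really is carefully synchronizing --- a detail the paper leaves implicit but which your argument settles correctly.
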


\begin{corollary}	
	For infinitely many $n$, there exist a $n$-state monotonic binary partial automaton with shortest carefully synchronizing word of length at least $\frac{(n - 1)^2}{16}$.
\end{corollary}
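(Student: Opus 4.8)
The plan is to derive this corollary from the preceding binary construction by the deletion argument announced in the sentence just before the two corollaries. Starting from the $(4m+3)$-state binary monotonic automaton $A$ of the previous theorem, I would delete its two extreme sink states $q_1$ and $q_{4m+3}$, leaving every transition that pointed into them undefined. The result $A'$ is a monotonic binary partial automaton on $n = 4m+1$ states, so that $m = (n-1)/4$ and the target quantity $m^2$ rewrites as $\frac{(n-1)^2}{16}$; thus it suffices to establish a lower bound of $m^2$ on the length of a shortest carefully synchronizing word of $A'$.

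The core of the argument is the correspondence between careful synchronization and subset synchronization recorded in Lemma 1 of \cite{Vorel2016}. Completing $A'$ by routing every undefined transition to a fresh sink $\perp$ recovers essentially $A$ with $q_1$ and $q_{4m+3}$ identified with $\perp$. Under this correspondence, a word carefully synchronizes $A'$ if and only if it synchronizes the whole remaining state set $\{q_2, \ldots, q_{4m+2}\}$ in $A$ to a common state distinct from the deleted sinks $q_1$ and $q_{4m+3}$, the surviving sink $q_{2m+2}$ being the natural target. This is the formal sense in which ``careful'' means ``never mapping a state into a deleted sink''.

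Granting this equivalence, the lower bound transfers directly. Any carefully synchronizing word of $A'$ synchronizes, in particular, the pair $S = \{q_{m+2}, q_{4m+2}\}$, and it never maps a state of $S$ into $q_1$ or $q_{4m+3}$, since those transitions are undefined. This is exactly the regime analysed in the proof of the preceding theorem, where synchronizing $S$ while avoiding $q_1$ and $q_{4m+3}$ was shown to force at least $m$ applications of the letter $0$, each separated by $m-1$ applications of $1$, for a total length of at least $m^2$. Hence a shortest carefully synchronizing word of $A'$ has length at least $m^2 = \frac{(n-1)^2}{16}$.

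The step I expect to be the main obstacle is existence: verifying that $A'$ is carefully synchronizing at all, i.e.\ that a single word maps every remaining state to $q_{2m+2}$ through defined transitions only. Here I would exploit monotonicity together with the observation that, after the deletion, the domain of each of the two letters is an order-convex set of states (an interval); by the reasoning behind Theorem \ref{tm-pairwise} it then suffices to carefully synchronize the minimal and maximal remaining states, after which every intermediate state is squeezed between their images and keeps its transitions defined. Checking that such a safe collapsing word genuinely exists in the binary setting, where the extreme remaining states admit very few defined letters, is the delicate point and is where the construction must be handled with care.
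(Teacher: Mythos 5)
Your proposal reproduces the paper's own (one-sentence) derivation: delete $q_1$ and $q_{4m+3}$ from the $(4m+3)$-state binary construction, set $n=4m+1$ so that $m^2=\frac{(n-1)^2}{16}$, and transfer the lower bound by noting that a carefully synchronizing word must in particular synchronize $S=\{q_{m+2},q_{4m+2}\}$ while never entering the deleted sinks. That transfer is fine. But the point you defer to the end as ``the delicate point'' is not a routine verification --- it is where the argument actually breaks for this construction. After the deletion, the letter $1$ is undefined at $q_2$ (since $\delta(q_2,1)=q_1$), and for every $m\ge 2$ the letter $0$ is undefined at $q_{4m+2}$ (since $\delta(q_{4m+2},0)=q_{4m+3}$). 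Hence no letter of the binary alphabet is defined on all surviving states, so no nonempty word can even take its first step carefully, and the partial automaton admits no carefully synchronizing word at all. Your interval/squeeze argument cannot rescue this: it reduces careful synchronization to carefully synchronizing the extremal surviving states $q_2$ and $q_{4m+2}$, and these two states admit no common defined letter. Equivalently, in the complete automaton the pair $\{q_2,q_{4m+2}\}$ is not synchronizing (any first letter sends one of the two into $q_1$ or into $q_{4m+3}$), so the full surviving state set is not a synchronizing set and the standard correspondence between careful synchronization and subset synchronization yields nothing.

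To be fair, the paper offers no more than the same sentence (``by removing the first and the last state\ldots'') for both corollaries, so your attempt matches its route exactly and inherits its gap; the three-letter corollary survives this deletion because there the letter $1$ remains totally defined and the exhibited word $1^{m-1}(01^{m-1})^m2$ can be checked to act carefully on all surviving states, but the binary corollary does not. Closing the gap requires changing the construction itself --- for instance, altering the transitions at the extreme states so that some letter stays totally defined while the avoidance of the outer sinks is still forced --- not a sharper analysis of the automaton as given. You were right to single this step out; the correct conclusion is that, as stated, it fails rather than that it is merely fiddly.
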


\begin{question}
	Find upper bounds on the length of a shortest carefully synchronizing words for monotonic partial automata.
\end{question}

A related question is to measure the shortest length of a word accepted simultaneously by $k$ monotonic automata. This question is important, for example, for investigation of the computational complexity of the {\sc Finite Automata Intersection} problem, see Section \ref{sec-complex} for the details. For alphabet of unbounded size, a partial answer is provided by the following theorem.

\begin{theorem} \label{tm-acc-k}
	For any $k > 0$, there exist $k$ $3$-state monotonic automata over a $k$-letter alphabet, such that the length of a shortest word accepted by all automata is at least $2^k - 1$.
\end{theorem}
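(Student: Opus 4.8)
The plan is to simulate a $k$-bit binary counter, one monotonic automaton per bit, so that the only way for all the automata to accept simultaneously is to drive the counter from $0$ up to $2^k-1$, which requires $2^k-1$ increments. I take the alphabet $\{a_1,\dots,a_k\}$ and give each automaton $A_i$ three states $D_i < E_i < F_i$ (think ``dead'', ``empty'', ``full''), with initial state $E_i$ and unique accepting state $F_i$; thus a word is accepted by \emph{all} the $A_i$ exactly when every bit is full at the end, i.e. when the counter equals $2^k-1$. The letter $a_j$ should model the increment whose carry terminates at position $j$: it requires bits $1,\dots,j-1$ to be full and sets bit $j$ full while clearing the lower bits. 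Concretely I define the action of $a_j$ on $A_i$ by cases: it is the identity if $i>j$; it is the filling map $E_i\mapsto F_i,\ F_i\mapsto F_i,\ D_i\mapsto D_i$ if $i=j$; and it is the consuming map $F_i\mapsto E_i,\ E_i\mapsto D_i,\ D_i\mapsto D_i$ if $i<j$, where landing in $D_i$ is the penalty for trying to consume a lower bit that was not full.

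The first key step is monotonicity: each of these three maps is non-decreasing for the single fixed order $D_i<E_i<F_i$, so every $A_i$ is a $3$-state monotonic automaton. This is also the step I expect to be the main obstacle. A literal counter seems to need two guards on each bit at once --- refusing to overfill a full bit and refusing to consume an empty one --- and these two guards would force the dead state to be simultaneously the largest state (reached by overfilling $F_i$) and the smallest state (reached by consuming $E_i$), which is impossible for a monotone self-map of a chain. I sidestep this by keeping only the second guard (consuming an empty bit is fatal, sending it down to the minimum state $D_i$) and discarding the first: re-filling an already full bit is permitted but harmless. With the dead state placed at the bottom of the order, the filling map, the consuming map, and the identity are all monotone, which resolves the obstruction.

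For the lower bound I use the potential $\Phi=\sum_{i=1}^{k}2^{i-1}b_i$, where $b_i=1$ if $A_i$ is in state $F_i$ and $b_i=0$ otherwise; this is simply the counter value. The initial configuration has $\Phi=0$, and any accepted word must end with $\Phi=2^k-1$. I would first observe that a bit sent to $D_i$ can never leave $D_i$ and in particular never reaches $F_i$, so an accepting word never kills any bit; hence at every step the guard on bits $1,\dots,j-1$ is met and all of them are full. A short computation then shows that reading one letter increases $\Phi$ by at most $1$: a genuine increment (bit $j$ empty, lower bits full) raises $\Phi$ by exactly $1$, whereas a wasteful letter (bit $j$ already full) only clears lower bits and changes $\Phi$ by a non-positive amount. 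Therefore at least $2^k-1$ letters are required.

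Finally I exhibit a word of length exactly $2^k-1$ matching the bound, namely the ruler (increment) sequence $a_{\rho(1)+1}\,a_{\rho(2)+1}\cdots a_{\rho(2^k-1)+1}$, where $\rho(n)$ denotes the $2$-adic valuation of $n$, so that $a_{\rho(n)+1}$ is precisely the letter performing the increment from $n-1$ to $n$. By construction its $n$-th letter acts when bits $1,\dots,\rho(n)$ are full, so it never triggers a death, it carries the counter through every value $0,1,\dots,2^k-1$, and it ends in the all-full configuration accepted by every $A_i$. This establishes that the shortest word accepted by all $k$ automata has length exactly $2^k-1$, which in particular is at least $2^k-1$.
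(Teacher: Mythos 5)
Your construction is identical to the paper's up to renaming ($D_i,E_i,F_i$ are the paper's $f_i,s_i,t_i$, with the same action of $a_j$ split into the cases $i>j$, $i=j$, $i<j$), and both proofs rest on the same binary-counter idea; your potential-function accounting is just a more explicit write-up of the paper's one-line induction, and your ruler-sequence witness makes the matching upper bound concrete. The proposal is correct and takes essentially the same approach as the paper.
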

\begin{proof}
	To prove the theorem, we show how to imitate a simple binary counter with $k$ monotonic automata. Consider the following family $A_i = (Q_i, \Sigma, \delta_i)$, $1 \le i \le k$. Here $Q_i = \{f_i, s_i, t_i\}$, $\Sigma = \{a_1, \ldots, a_k\}$ and $\delta_i$ are defined as follows. We define $\delta_i(s_i, a_i) = t_i$, and $\delta_i(s_i, a_j) = f_i$,  $\delta_i(t_i, a_j) = s_i$ for $i < j$. All yet undefined transitions are self-loops. In each $A_i$, we take $s_i$ as the initial state, and $t_i$ as the only accepting state.
	
	By induction, the length of a shortest word accepted by all automata is $2^{k} - 1$, since each next automaton can be mapped to its accepting state only when all previous automata are already in their accepting states, and this operation maps all previous automata to their initinal states.
\end{proof}

Hence, we get a lower bound of $2^{\frac{n}{3}} - 1$, where $n$ is the total number of states in the automata. Thus the most obvious candidate for a certificate to show that {\sc Finite Automata Intersection} is in NP for monotonic automata fails.

The proof of Theorem \ref{tm-acc-k} implies the following interesting result.

\begin{corollary}
	For each $k$ there exists a monotonic partial automaton with $3k$ states, alphabet of size $k + 1$ and a shortest carefully synchronizing word of length at least $3^{k}$.
\end{corollary}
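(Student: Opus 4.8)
The plan is to adapt the counter construction from the proof of Theorem \ref{tm-acc-k}, replacing the binary counter by a ternary one and reinterpreting the ``failure'' behaviour as undefined transitions. Concretely, I would keep $k$ three-state gadgets $Q_i=\{q_i^0,q_i^1,q_i^2\}$, now reading the three states as the digit values $0,1,2$ of position $i$ of a base-$3$ counter, take the disjoint union of all $3k$ states as the state set, and use letters $a_1,\dots,a_k$ as ``increment position $i$ (with carry)'' operations together with one additional letter $b$ performing the final merge (giving alphabet size $k+1$). The point of passing to a partial automaton is that wherever the proof of Theorem \ref{tm-acc-k} sent a gadget to its dead state $f_i$, we instead leave the transition undefined; since careful synchronization forbids applying any letter that is undefined on a currently occupied state, this is exactly what forces the letters $a_i$ to be applied only in the order dictated by counting.

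Next I would fix the transitions so that $a_i$ increments digit $i$ along $q_i^0\to q_i^1\to q_i^2$, carries (resets) all lower digits $j<i$ back to $q_j^0$, is undefined precisely when the required carry-in is absent, and acts as the identity on higher gadgets $j>i$; the letter $b$ would be defined only on the top states $q_i^2$ and collapse them to a single sink, so that $b$ is carefully applicable exactly when every gadget sits at its maximal digit. Monotonicity is then verified by ordering the states block by block, $q_1^0<q_1^1<q_1^2<q_2^0<\dots<q_k^2$: within a block the chosen transitions are monotone by the same computation as for the gadget of Theorem \ref{tm-acc-k}, and across blocks no transition ever leaves a gadget, so a state of a lower block is always mapped below a state of a higher block.

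For the lower bound I would run the inductive argument of Theorem \ref{tm-acc-k} in base $3$: starting from the all-zero configuration, the only way to reach the all-maximal configuration $\{q_1^2,\dots,q_k^2\}$ without taking an undefined transition is to realize every intermediate value of the counter, because digit $i$ can be advanced only after digits $1,\dots,i-1$ have been filled and each such advance resets them, so position $i$ can be incremented only once per $3^{\,i-1}$ steps of the lower positions. This yields at least $3^k-1$ applications of the letters $a_1,\dots,a_k$ before $b$ becomes applicable, and the concluding application of $b$ gives total length at least $3^k$; any preliminary steps needed to drive the full initial set into the counter's behaviour can only increase this length.

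The main obstacle I expect is not the counting induction itself but the careful-synchronization bookkeeping. One must check that the automaton is carefully synchronizing at all, i.e.\ that the full set of $3k$ states, which contains all three states of every gadget, can actually be collapsed, and that no word shorter than $3^k$ succeeds by exploiting the within-gadget structure rather than faithfully simulating the counter. Getting the definedness pattern of $b$, and of the $a_i$ on non-singleton gadget subsets, exactly right so that the subset dynamics of careful synchronization coincide with the parallel-configuration dynamics of Theorem \ref{tm-acc-k}, while simultaneously respecting a single linear order, is the delicate part of the argument.
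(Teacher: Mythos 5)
Your proposal is correct and follows essentially the same route as the paper: reuse the counter gadgets of Theorem~\ref{tm-acc-k} in base $3$, turn the transitions into the removed failure states into undefined transitions so that careful synchronization enforces the counting order, add one extra merging letter, and order the states block by block to keep monotonicity. The paper's own proof is just a terser sketch of exactly this construction; the bookkeeping issues you flag (the behaviour of $a_i$ on the top state of its own gadget and the collapse of the full $3k$-state set into counter configurations) are real but resolve as you anticipate.
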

\begin{proof}
	Remove the states $f_i$ in all automata from the construction of Theorem \ref{tm-acc-k} and leave all transtions to this states undefined. Then add a letter $a$ such that $\delta(t_i, a) = t_k$ for $1 \le i \le k$. Thus we get a carefully synchronizing automaton imitating a binary counter. The last step is to note that $2^{\frac{1}{2}} < 3^\frac{1}{3}$, and to imitate a ternary counter in the exactly same way instead.
\end{proof}
	
We note that the ternary counter is optimal since $2^{\frac{1}{2}} < 3^\frac{1}{3}$ and $3^\frac{1}{3} > \ell^\frac{1}{\ell}$ for $\ell \ge 4$.

The bound in this corollary is of the same order as in the result of Rystsov \cite{Rystsov1980} and Martyugin \cite{Martyugin2010}, but our example is monotonic and has simpler structure. Since each carefully synchronizing partial automaton  has a carefully synchronizing word of length $O(3^{\frac{n}{3}})$ \cite{Rystsov1980}, the bound is asymptotically tight.

Some additional optimization can be done for the described construction. In particular, the after countng to $3^k - 1$, only two states in the set of reached states can be synchronized by a new letter. Then counting is repeated (to $3^{k - 1} - 1$) and again two states are synchronized, and so on. Thus, the bound will become $3^k + 3^{k - 1} + \ldots + 3 + 1$. However, the bound remains of the same order and is still smaller than the bounds of Rystsov and Martyugin for general automata.

For a constant number of letters the considered problems remain open.

\begin{question}
	What is the length of a shortest word accepted simultaneously by $k$ monotonic automata over an alphabet of constant size? What is the length of a shortest word carefully synchronizing a monotonic partial automaton with $n$ states and alphabet of constant size?
\end{question}

\section{Complexity Results}\label{sec-complex}

In this section, we obtain computational complexity results for several problems related to subset synchronization in monotonic automata. We improve Eppstein's construction \cite{Eppstein1990} to make it suitable for monotonic automata. We shall need the following NP-complete {\sc SAT} problem \cite{Sipser2006}.

\begin{tabular}{||p{36em}}
	~{\sc SAT} \\
	~{\em Input}: A set $X$ of $n$ boolean variables and a set $C$ of $m$ clauses;\\
	~{\em Output}: Yes if there exists an assignment of values to the variables in $X$ such that all clauses in $C$ are satisfied, No otherwise.
\end{tabular}

\vspace{.1cm}

Provided a set $X$ of boolean variables $x_1, \ldots, x_n$ and a clause $c_j$, construct the following automaton $A_j = (Q, \Sigma, \delta)$. Take 

$$Q = \{q_1, \ldots, q_{n + 1}\} \cup \{q'_2, \ldots, q'_n\} \cup \{s, t\}.$$

Let $\Sigma = \{0, 1, r\}$. Define the transition function $\delta$ as follows. For each $i$, $1 \le i \le n$, map a state $q_i$ to $q'_{i + 1}$ (or to $t$ if $i = n$) by a letter $x \in \{0, 1\}$ if the assignment $x_i = x$ satisfies $c_j$, and to $q_{i + 1}$ otherwise. For each $i$, $2 \le i \le n - 1$, set $\delta(q'_i, x) = \delta(q'_{i + 1}, x)$ for $x \in \{0, 1\}$. Set $\delta(q'_n, x) = t$ for $x \in \{0, 1\}$. Define transitions from $t$ for letters $0, 1$ as self-loops. Finally, define $\delta(q, r) = s$ for $q \in Q \setminus \{t\}$, $\delta(t, r) = t$. See Figure \ref{fig-clause} for an example.

\setlength{\unitlength}{2pt}
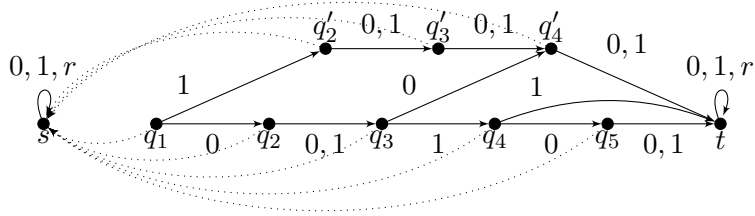
\begin{figure}[hbt]
	\begin{center}
		\begin{tikzpicture}[->,>=latex',
		vertex/.style={circle, draw=black, fill, minimum width=1.5mm, inner sep=0pt, outer sep=0pt},
		every label/.style={inner sep=0pt, minimum width=0pt, label distance=0.1mm},
		yscale=1,
		xscale=1.5
		]
		\graph[nodes=vertex, empty nodes, no placement] {
			
			s[x=0,y=0,label=below:$s$];
			
			q1[x=1,y=0,label=below:$q_1$] -> [edge label={$0$}, swap]
			q2[x=2,y=0,label=below:$q_2$] -> [edge label={$0, 1$}, swap]
			q3[x=3,y=0,label=below:$q_3$] -> [edge label={$1$}, swap]
			q4[x=4,y=0,label=below:$q_4$] -> [edge label={$0$}, swap]
			q5[x=5,y=0,label=below:$q_5$] -> [edge label={$0, 1$}, swap]
			
			t[x=6,y=0,label=below:$t$];
			
			qp2[x=2.5,y=1,label=above:$q'_2$] -> [edge label={$0, 1$}]
			qp3[x=3.5,y=1,label=above:$q'_3$] -> [edge label={$0, 1$}]
			qp4[x=4.5,y=1,label=above:$q'_4$] -> [edge label={$0, 1$}, near start]
			t;
			
			qp2 -> [edge, bend right = 50, dotted] s;
			qp3 -> [edge, bend right = 50, dotted] s;
			qp4 -> [edge, bend right = 50, dotted] s;
			
			q1 -> [edge label={$1$}, near start] qp2;
			q3 -> [edge label={$0$}, near start] qp4;
			q4 -> [edge label = {$1$}, bend left = 30, near start] t;
			
			q1 -> [edge, bend left = 50, dotted] s;
			q2 -> [edge, bend left = 50, dotted] s;
			q3 -> [edge, bend left = 50, dotted] s;
			q4 -> [edge, bend left = 50, dotted] s;
			q5 -> [edge, bend left = 50, dotted] s;
			
			t -> [edge label={$0, 1, r$}, loop above] t;
			
			s -> [edge label={$0, 1, r$}, loop above] s;
		};
		\end{tikzpicture}
		\caption{The automaton $A_j$ for a clause $c_j = (x_1 \vee \overline{x}_3 \vee x_4)$. Dotted arrows represent transitions for the letter $r$.} \label{fig-clause}
	\end{center}
\end{figure}

Note that $A_j$ is monotonic, since it respects the order

$$s, q_1, q_2, q'_2, q_3, q'_3, \ldots, q_n, q'_n, q_{n + 1}, t.$$

It is also weakly acyclic, since its underlying digraph has no simple cycles of length at least $2$.

Also, provided the number of variables $n$, construct an automaton $T = (Q_T, \Sigma, \delta_T)$ as follows. Take $Q_T = \{a, p_1, \ldots, p_{n + 1}, b\}$, $\Sigma = \{0, 1, r\}$. Define $\delta(p_i, x) = p_{i + 1}$ for each $i$, $1 \le i \le n$, and $x \in \{0, 1\}$, and $\delta(p_{n + 1}, x) = b$ for $x \in \{0, 1\}$. Define also $\delta(a, x) = a$ and $\delta(b, x) = b$ for each $x \in \Sigma$, and $\delta(p_i, r) = a$ for $1 \le i \le n + 1$. See Figure \ref{fig-timer} for an example. This automaton is monotonic, since it respects the order $a, p_1, \ldots, p_{n + 1}, b$, and it is obviously weakly acyclic.

\setlength{\unitlength}{2pt}
\begin{figure}[hbt]
	\begin{center}
		\begin{tikzpicture}[->,>=latex',
		vertex/.style={circle, draw=black, fill, minimum width=1.5mm, inner sep=0pt, outer sep=0pt},
		every label/.style={inner sep=0pt, minimum width=0pt, label distance=0.1mm},
		yscale=1,
		xscale=1.5
		]
		\graph[nodes=vertex, empty nodes, no placement] {
			
			a[x=0,y=0,label=below:$a$];

			p1[x=1,y=0,label=below:$p_1$] -> [edge label={$0,1$}]
			p2[x=2,y=0,label=below:$p_2$] -> [edge label={$0,1$}]
			p3[x=3,y=0,label=below:$p_3$] -> [edge label={$0,1$}]
			p4[x=4,y=0,label=below:$p_4$] -> [edge label={$0,1$}]
			p5[x=5,y=0,label=below:$p_5$] -> [edge label={$0,1$}]
			
			b[x=6,y=0,label=below:$b$];

			p1 -> [edge, bend left = 50, dotted] s;
			p2 -> [edge, bend left = 50, dotted] s;
			p3 -> [edge, bend left = 50, dotted] s;
			p4 -> [edge, bend left = 50, dotted] s;
			p5 -> [edge, bend left = 50, dotted] s;
			
			a -> [edge label={$0, 1, r$}, loop above] a;
			
			b -> [edge label={$0, 1, r$}, loop above] b;
		};
		\end{tikzpicture}
		\caption{The automaton $T$ for $n = 4$ variables. Dotted arrows represent transitions for the letter $r$.} \label{fig-timer}
	\end{center}
\end{figure}
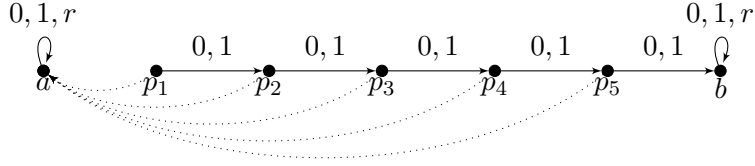

First, we prove NP-completeness of the following problem.

\begin{tabular}{||p{36em}}
	~{\sc Finite Automata Intersection} \\
	~{\em Input}: Automata $A_1, \ldots, A_k$ (with initial and accepting states); \\
	~{\em Output}: Yes if there is a word which is accepted by all automata, No otherwise.
\end{tabular}

\vspace{.1cm}

This problem is PSPACE-complete for general automata \cite{Kozen1977}, and NP-complete for binary weakly acyclic automata \cite{Ryzhikov2017}. Some results on this problem are surveyed in \cite{Holzer2011}. Blondin et al. \cite{Blondin2016} provide further results on the problem. Some complexity lower bounds are presented in \cite{Wehar2014} and \cite{Fernau2016}.

\begin{theorem} \label{tm-inters}
	The {\sc Finite Automata Intersection} problem is NP-complete for monotonic weakly acyclic automata over a three-letter alphabet.
\end{theorem}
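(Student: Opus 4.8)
The plan is to prove membership in NP and NP-hardness separately, using the automata $A_1,\dots,A_m$ and the timer $T$ built above as the gadgets of a reduction from {\sc SAT}; in the produced instance I would declare $q_1$ and $p_1$ to be the initial states of $A_j$ and $T$, and $t$ and $b$ their accepting states.

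I would first settle membership in NP, which is the subtle part of the statement. By Theorem~\ref{tm-acc-k} the naive ``guess the word'' certificate fails for monotonic automata in general, so the argument must exploit weak acyclicity. For any family of weakly acyclic automata I would bound the length of a shortest commonly accepted word as follows: in the product automaton every coordinate is nondecreasing along any run, so on a shortest accepting run no product-state repeats, and hence the product state strictly increases in the product order at each step, which forces at least one coordinate to increase its position. The total number of steps is therefore at most $\sum_i (|Q_i|-1)$, i.e.\ linear in the total number of states. Such a word can be guessed and all automata simulated on it in polynomial time, placing the problem in NP for the whole class. (In the present construction the timer $T$ pins this length to exactly $n+1$, giving the bound directly.)

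For NP-hardness I would establish the correspondence between words and assignments: reading the $i$-th input letter as the value of $x_i$, the automaton $A_j$ reaches its accepting state $t$ if and only if some read literal satisfies $c_j$, because the only route into $t$ passes through the primed track $q'_\bullet$, which is entered exclusively via a satisfying literal; if no literal of $c_j$ is satisfied, the run stays on the lower track and dead-ends at $q_{n+1}$, from which $t$ is unreachable. The completeness direction is then immediate: given a satisfying assignment $\phi$, the word $w=\phi(x_1)\cdots\phi(x_n)0$ of length $n+1$ drives every $A_j$ into $t$ within the first $n$ letters (where it stays by the self-loops) and drives $T$ into $b$, so $w$ lies in the intersection.

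The soundness direction is where the timer earns its place. Any word $w$ accepted by $T$ must begin with $n+1$ letters from $\{0,1\}$, since an occurrence of $r$ among the first $n+1$ positions would send $T$ to the dead state $a$; its first $n$ letters thus specify a full assignment $\phi$. Because each $A_j$ also accepts $w$, each $A_j$ must have reached $t$, which by the correspondence means $\phi$ satisfies $c_j$; as this holds for every clause, $\phi$ satisfies the formula. I expect the main obstacle to be precisely the NP-membership bound rather than the reduction, which is a direct monotonic, weakly acyclic adaptation of Eppstein's construction; the remaining care is mainly in verifying that each gadget is simultaneously monotonic and weakly acyclic over only the three letters $\{0,1,r\}$, as already recorded for $A_j$ and $T$.
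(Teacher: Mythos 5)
Your NP-membership argument is fine and is in fact more self-contained than the paper's, which simply cites the known fact that {\sc Finite Automata Intersection} is in NP for weakly acyclic automata; your product-automaton/non-decreasing-coordinates bound of $\sum_i(|Q_i|-1)$ is exactly the right way to prove that fact. The reduction, however, has a genuine gap in the soundness direction. You assert that if no literal of $c_j$ is satisfied, the run ``dead-ends at $q_{n+1}$, from which $t$ is unreachable.'' That is false for the gadget $A_j$ as built: $\delta(q_{n+1},x)=t$ for $x\in\{0,1\}$ (see the arc $q_5\to t$ labelled $0,1$ in the figure), and this is not an accident --- in the order $s,q_1,q_2,q'_2,\ldots,q_n,q'_n,q_{n+1},t$ monotonicity forces $\delta(q'_n,x)\le\delta(q_{n+1},x)\le\delta(t,x)$, i.e.\ $\delta(q_{n+1},x)=t$. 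Consequently $A_j$ restricted to $\{0,1\}$ accepts \emph{every} word of length at least $n+1$: either the run branches onto the primed track and reaches $t$ after $n$ letters, or it stays on the lower track, reaches $q_{n+1}$ after $n$ letters, and falls into $t$ on the $(n+1)$-st. With your choice of $b$ as the accepting state of $T$ and a witness word $\phi(x_1)\cdots\phi(x_n)0$ containing no $r$, the word $0^{n+1}$ is accepted by all automata regardless of satisfiability, so your reduction maps every instance to a yes-instance.

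This is precisely why the paper makes $a$ (not $b$) the accepting state of $T$ and why the third letter $r$ is essential to the argument rather than decorative: $T$ accepts only if $r$ occurs within the first $n+1$ positions, each $A_j$ accepts only if no $r$ occurs before it reaches $t$ (since $r$ sends every state except $t$ to the rejecting sink $s$), and together these force the common word to be $z_1\cdots z_n r$ with $z_i\in\{0,1\}$ and every $A_j$ already at $t$ after exactly $n$ letters --- which happens only via a satisfying branch, not via $q_{n+1}$. A useful sanity check you missed: if your version worked, the letter $r$ could be dropped entirely, yielding NP-hardness for \emph{binary} monotonic weakly acyclic automata, which the paper explicitly lists as an open question.
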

\begin{proof}
	The fact that the problem is in NP follows from the fact that {\sc Finite Automata Intersection} for weakly acyclic automata is in NP \cite{Ryzhikov2017}.
	
	To prove hardness, we reduce the {\sc SAT} problem. For each clause $c_j \in C$, construct an automaton $A_j$, and set $q_1$ as its initial state and $t$ as its only accepting state. Construct also the automaton $T$ with the initial state $p_1$ and accepting state $a$.
	
	We claim that $C$ is satisfiable if and only if all automata in $\{A_j \mid c_j \in C\} \cup \{T\}$ accept a common word $w$. Indeed, assume that there is a common word accepted by all these automata. Then none of the first $n$ letters of this word can be $r$, otherwise all automata $A_j$ are mapped to $s$, which is a non-accepting sink state. The next letter has to be $r$, otherwise $T$ is mapped to $b$, which is a non-accepting sink state. But that means that in each $A_j$, the set $q_1$ is mapped by a $n$-letter word $z_1 \ldots z_n$ to the accepting state $t$. Thus, by construction, the assignment $x_i = z_i$ satisfies all clauses in $C$.
	
	By the same reasoning, if the assignment $x_i = z_i$, $1 \le i \le n$, satisfies all clauses in $C$, then $z_1 \ldots z_n r$ is a word accepted by all automata. 
\end{proof}

Now we switch to a related {\sc Set Rank} problem.
 
\begin{tabular}{||p{30em}}
	~{\sc Set Rank} \\
	~{\em Input}: An automaton $A$ and a set $S$ of its states;\\
	~{\em Output}: The rank of $S$. 
\end{tabular}

\vspace{.1cm}

This problem is hard to approximate for binary weakly acyclic automata \cite{Ryzhikov2017}. To get inapproximability results for monotonic automata, we use the following problem.


\begin{tabular}{||p{36em}}
	~{\sc Max-$3$SAT} \\
	~{\em Input}: A set $X$ of $n$ boolean variables and a set $C$ of $m$  $3$-term clauses;\\
	~{\em Output}: The maximum number of clauses that can be simultaneously satisfied by some assignment of values to the variables.
\end{tabular}

\vspace{.1cm}

This problem cannot be approximated in polynomial time within a factor of $\frac{7}{8} + \epsilon$ for any $\epsilon > 0$ unless P = NP \cite{Hastad2001}.

\begin{theorem}\label{tm-rank}
	The {\sc Set Rank} problem cannot be approximated in polynomial time within a factor of $\frac{9}{8} - \epsilon$ for any $\epsilon > 0$ in monotonic weakly acyclic automata over a three-letter alphabet unless P~=~NP.
\end{theorem}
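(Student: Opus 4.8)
The plan is to reduce from the gap version of {\sc Max-$3$SAT}. By H\aa stad's theorem \cite{Hastad2001}, for every $\epsilon > 0$ it is NP-hard to distinguish satisfiable instances, where all $m$ clauses can be satisfied, from instances in which every assignment satisfies at most $(\frac{7}{8} + \epsilon)m$ clauses. I will build a single monotonic weakly acyclic automaton $\mathcal{A}$ over $\{0, 1, r\}$ together with a set $S$ whose rank equals $2m - \mathrm{OPT}$, where $\mathrm{OPT}$ is the maximum number of simultaneously satisfiable clauses. The Max-$3$SAT gap between $m$ and $(\frac{7}{8} + \epsilon)m$ then becomes a Set Rank gap between $m$ and $(\frac{9}{8} - \epsilon)m$, since $2m - (\frac{7}{8} + \epsilon)m = (\frac{9}{8} - \epsilon)m$, and a polynomial $(\frac{9}{8} - \epsilon)$-approximation of the rank would separate the two cases, which is impossible unless P~$=$~NP.

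Concretely, I would take $m$ pairwise state-disjoint copies of the clause automaton $A_j$ from the proof of Theorem \ref{tm-inters}, each equipped with its own accepting sink $t^{(j)}$ and rejecting sink $s^{(j)}$. To each copy I attach a fresh \emph{confirmation path} $d^{(j)}_1 \to \cdots \to d^{(j)}_{n+1}$, advancing on $0$ and $1$, with $d^{(j)}_{n+1}$ self-looping on $0, 1$; I set $\delta(d^{(j)}_{n+1}, r) = t^{(j)}$ and $\delta(d^{(j)}_i, r) = z^{(j)}$ for $i \le n$, where $z^{(j)}$ is yet another fresh sink distinct from $s^{(j)}$. The set is $S = \bigcup_{j=1}^{m} \{q^{(j)}_1, d^{(j)}_1\}$, that is, two states per clause. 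The key lemma is $\mathrm{rank}(S) = 2m - \mathrm{OPT}$. Since the copies share no states and all sinks are pairwise distinct, no word can merge states coming from two different copies, so the image size of $S$ under any word $w$ is the sum over $j$ of the image size of the pair $\{q^{(j)}_1, d^{(j)}_1\}$, and each pair contributes $1$ or $2$. I then verify that a pair is merged by $w$ precisely when the length-$n$ prefix $\alpha$ of $w$ is an assignment satisfying $c_j$ and $w$ applies $r$ only after position $n$: in that case both states reach $t^{(j)}$, because $q^{(j)}_1$ reaches $t^{(j)}$ iff $\alpha$ satisfies $c_j$ (exactly as in Theorem \ref{tm-inters}) and the confirmation path reaches $t^{(j)}$ through the final $r$. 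As all copies read the same word, the merged pairs correspond exactly to the clauses satisfied by the single assignment $\alpha$, so $\max_w(\text{merged pairs}) = \mathrm{OPT}$ and hence $\mathrm{rank}(S) = m + (m - \mathrm{OPT}) = 2m - \mathrm{OPT}$.

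The step I expect to be the main obstacle is ruling out \emph{spurious} merges, that is, proving that no malformed word beats a genuine assignment word. The dangerous case is an early $r$: in $A_j$ the letter $r$ sends every non-$t$ state to $s^{(j)}$, so without the separate sink $z^{(j)}$ the one-letter word $r$ would merge every pair at $s^{(j)}$ and collapse the rank to $m$ regardless of satisfiability, destroying the gap. Routing every unfinished confirmation state $d^{(j)}_i$ ($i \le n$) to the distinct sink $z^{(j)}$, and self-looping $d^{(j)}_{n+1}$ so that surplus $0, 1$ letters are harmless, guarantees that an unsatisfied clause contributes exactly $2$ and a satisfied one exactly $1$; a short prefix, a missing trailing $r$, or a premature $r$ all leave the two states in distinct sinks. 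Carrying out this case analysis over all word shapes is the technical heart of the argument.

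Finally I would record that $\mathcal{A}$ is monotonic and weakly acyclic over the three letters $\{0, 1, r\}$. Each individual gadget respects a linear order: the copy of $A_j$ uses the order exhibited in Theorem \ref{tm-inters}, and the confirmation path together with its sinks is ordered analogously, with $t^{(j)}, s^{(j)}, z^{(j)}$ placed last. Since no transition ever crosses between two copies, concatenating the per-block orders yields one global linear order preserved by all three letters, which makes $\mathcal{A}$ monotonic, and the same concatenation is forward-respecting, which makes $\mathcal{A}$ weakly acyclic. Combining the rank identity with the Max-$3$SAT gap then gives the stated $\frac{9}{8} - \epsilon$ inapproximability.
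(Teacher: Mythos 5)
Your overall strategy---reducing from gap {\sc Max-$3$SAT} with one clause gadget $A_j$ plus one timing gadget per clause, and showing $\mathrm{rank}(S)=2m-\mathrm{OPT}=m+h$---is exactly the paper's, and your arithmetic for the $\frac{9}{8}-\epsilon$ gap is right. But your key lemma is false for the construction as you specified it, because of the self-loop you put on $d^{(j)}_{n+1}$ for the letters $0,1$. In the clause automaton $A_j$ of Theorem \ref{tm-inters}, the state $q_{n+1}$ is mapped to $t$ by both $0$ and $1$ (see Figure \ref{fig-clause}; this is in fact forced by monotonicity, since $q'_n < q_{n+1} < t$ and $\delta(q'_n,x)=t$ with $t$ necessarily the maximum state). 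Hence for the word $w=0^{n+1}r$ \emph{every} pair in $S$ merges: $q^{(j)}_1$ reaches $t^{(j)}$ after $n+1$ letters from $\{0,1\}$ whether or not $0^n$ satisfies $c_j$ (if not, it sits at $q^{(j)}_{n+1}$ after $n$ letters and then falls into $t^{(j)}$ on the extra letter), while $d^{(j)}_1$ reaches $d^{(j)}_{n+1}$, idles on your self-loop, and is sent to $t^{(j)}$ by the final $r$. So $\mathrm{rank}(S)=m$ for every instance and the gap is destroyed. The word shape you did not analyze---first $r$ occurring at position at least $n+2$---is precisely the dangerous one; declaring surplus $0,1$ letters ``harmless'' is the error.

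The repair is exactly what the paper's timer $T$ does: instead of a self-loop, send $p_{n+1}$ (your $d^{(j)}_{n+1}$) on $0$ and $1$ to a dead sink $b^{(j)}$ distinct from $t^{(j)}$ and $s^{(j)}$, so that any word whose $(n+1)$st letter is not $r$ leaves the pair split forever; this forces the $r$ to arrive exactly at position $n+1$, after which the correctness argument you sketched goes through. Your other deviation---routing a premature $r$ from $d^{(j)}_i$, $i\le n$, to a third sink $z^{(j)}$ rather than to $t^{(j)}$, as the paper does by identifying $t$ in $A_j$ with $a$ in $T_j$---is harmless, since either choice keeps the pair split under an early $r$. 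With the self-loop replaced by the dead sink, your construction and case analysis coincide with the paper's proof.
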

\begin{proof}
		We reduce the {\sc Max-$3$SAT} problem. For each clause $c_j \in C$, construct an automaton $A_j$. Construct also $m$ copies of the automaton $T$, denoted $T_j$, $1 \le j \le m$. Define an automaton $A$ with the set of states which is the union of all sets of states of $\{A_j, T_j \mid 1 \le j \le m\}$, alphabet $\Sigma$ and transition functions defined in all constructed automata. For each $j$, identify the state $t$ in $A_j$ with the state $a$ in $T_j$. Take $S$ to be the set of states $q_1$ from each automaton $A_j$, and $p_1$ from each $T_j$. The constructed automaton is monotonic and weakly acyclic.

		If $h$ is the minimum number of clauses in $C$ that are not satisfied by an assignment, the set $S$ has rank $m + h$. Indeed, consider an assignment $x_i = z_i$, $1 \le i \le n$, not satisfying exactly $h$ clauses in $C$. Then the word $z_1 \ldots z_n r$ has rank $m + h$ with respect to $S$.
		
		In the other direction, let $w$ be a word of minimum rank with respect to the set $S$. If any of the first $n$ letters of $w$ is $r$, then $q_1$ in each $A_i$ is mapped to $s$ in the corresponding automaton, and thus $w$ has rank $2m$ with respect to $S$. The same is true if $(n + 1)$st letter of $w$ is not $r$, because then $p_1$ in each $T_i$ is mapped to $b$ in the corresponding automaton. If first $n$ letters $z_1, \ldots, z_n$ of $w$ are not $r$, and the next letter is $r$, then the assignment $x_i = z_i$ does not satisfy exactly $h'$ clauses, where $m + h'$ is the rank of the word $w$ with respect to $S$. For the word of minimum rank, we get the required equality.
		
		It is NP-hard to decide between (i) all clauses in $C$ are satisfiable and (ii) at most $(\frac{7}{8} + \epsilon)m$ clauses in $C$ can be satisfied by an assignment \cite{Hastad2001}. In the case (i), the rank of $S$ is $m$, in the case (ii) it is at least $m + (\frac{1}{8} - \epsilon)m$. Since it is NP-hard to decide between this two options, we get $(\frac{9}{8} - \epsilon)$-inapproximability for any $\epsilon > 0$.
\end{proof}

By using an argument similar to the proof of Theorem \ref{tm-rank}, we can show inapproximability of the maximization version of {\sc Finite Automata Intersection} (where we are asked to find a maximum number of automata accepting a common word). Indeed, take $m$ copies of $T$ together with the set $\{A_j \mid c_j \in C\}$ as the input of {\sc Finite Automata Intersection} and reduce {\sc Max-$3$SAT} to it (input and accepting states are assigned according to the construction in Theorem \ref{tm-inters}). Then the maximum number of automata accepting a common word is $m + g$, where $g$ is the maximum number of simultaneously satisfied clauses in $C$, since all copies of $T$ have to accept this word. Thus it is NP-hard to decide between (i) all $2m$ automata accept a common word and (ii) at most $m + (\frac{7}{8} + \epsilon)m$ automata accept a common word, and we get the following result.

\begin{corollary}
	The maximization version of the {\sc Finite Automata Intersection} problem cannot be approximated in polynomial time within a factor of $\frac{15}{16} + \epsilon$ for any $\epsilon > 0$ in monotonic weakly acyclic automata over a three-letter alphabet unless P~=~NP.
\end{corollary}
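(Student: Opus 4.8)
The plan is to turn the exact reduction of Theorem~\ref{tm-inters} into a gap-preserving one, reusing the clause automata $A_j$ and the timer automaton $T$ but starting from the gap version of {\sc Max-$3$SAT}. First I would set up the instance: given a {\sc Max-$3$SAT} formula with clauses $c_1, \ldots, c_m$ over variables $x_1, \ldots, x_n$, take the $m$ clause automata $A_1, \ldots, A_m$ (each with initial state $q_1$ and accepting state $t$) together with $m$ identical copies $T_1, \ldots, T_m$ of $T$ (each with initial state $p_1$ and accepting state $a$). This is $2m$ automata, all monotonic, weakly acyclic, and over the alphabet $\{0, 1, r\}$, and the reduction is plainly polynomial.

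The heart of the argument is to show that the maximum number of these $2m$ automata accepting a common word equals $m + g$, where $g$ is the maximum number of clauses of $C$ simultaneously satisfiable. For the easy inequality, any assignment $x_i = z_i$ satisfying $g$ clauses yields the word $z_1 \ldots z_n r$: reading $z_1 \ldots z_n$ drives each copy of $T$ from $p_1$ to $p_{n+1}$, and the final $r$ sends it to the accepting sink $a$, so all $m$ copies accept, while by the analysis in Theorem~\ref{tm-inters} exactly the $A_j$ whose clauses are satisfied reach $t$. For the reverse inequality I would exploit the sink structure. Since the copies of $T$ are identical, a word either makes all of them accept or none, so a word failing them accepts at most $m$ automata; and a word accepting all copies of $T$ must begin with $n$ letters from $\{0, 1\}$ followed by $r$. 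After this prefix, both $t$ and $s$ are sinks, so every $A_j$ is pinned to $t$ or $s$ according to whether the induced assignment (read off the first $n$ letters) satisfies $c_j$, regardless of any continuation. Hence no word is accepted by more than $m + g$ automata.

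Finally I would amplify the gap. By H\aa stad's theorem it is NP-hard to decide whether $C$ is fully satisfiable, so that $g = m$ and the optimum is $2m$, or whether at most $(\frac{7}{8} + \epsilon)m$ clauses are satisfiable, so that the optimum is at most $m + (\frac{7}{8} + \epsilon)m = (\frac{15}{8} + \epsilon)m$. The ratio of these two optima is $\frac{15}{16} + \frac{\epsilon}{2}$, so for a suitably small choice of the {\sc Max-$3$SAT} parameter any polynomial-time algorithm approximating the maximum number of accepting automata within a factor $\frac{15}{16} + \epsilon$ would separate the two cases and imply P~=~NP.

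The step I expect to be the main obstacle is the reverse inequality that no word is accepted by more than $m + g$ automata: one must verify that maximizing the count genuinely forces all timer copies to accept, rather than trading them away for extra clause automata, and that appending $r$ together with any continuation cannot alter which $A_j$ end up accepting. Both facts rest entirely on the sink behaviour of $a$, $b$, $s$ and $t$, so the careful bookkeeping of these sinks is where the argument must be pinned down; the counting and the gap computation afterwards are routine.
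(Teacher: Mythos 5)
Your proposal is correct and follows essentially the same route as the paper: take the $m$ clause automata $A_j$ together with $m$ copies of the timer $T$, show the optimum equals $m+g$ with $g$ the {\sc Max-$3$SAT} optimum, and apply H\aa stad's gap to get the $\frac{15}{16}+\epsilon$ factor. One minor slip worth fixing: a word accepted by all copies of $T$ need not begin with $n$ letters of $\{0,1\}$ followed by $r$ (it may contain an $r$ among its first $n$ letters), but in that case every $A_j$ is sent to the rejecting sink $s$, so only $m \le m+g$ automata accept and your upper bound still holds.
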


Consider now the following problem briefly discussed in the introduction.

\begin{tabular}{||p{30em}}
	~{\sc Careful Synchronization} \\
	~{\em Input}: A partial automaton $A$; \\
	~{\em Output}: Yes if $A$ is carefully synchronizing, No otherwise.
\end{tabular}

\begin{theorem}
	The {\sc Careful Synchronization} problem is NP-hard for monotonic  automata over a four-letter alphabet.
\end{theorem}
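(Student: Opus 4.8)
The plan is to reduce from {\sc SAT}, adapting the monotonic gadgets $A_j$ (one per clause) and the timer $T$ already built for Theorem~\ref{tm-inters}. The essential new idea is to convert the ``common acceptance'' condition of {\sc Finite Automata Intersection} into a careful-synchronization condition by (a) making the automaton \emph{partial}, so that the transitions that previously led to the dead state $s$ (all labelled $r$) are instead left undefined, and (b) identifying all the accepting sinks $t$ (and the timer sink $a$) into one global sink $\top$, which becomes the unique synchronization target. The three letters $0,1,r$ keep their roles: $0,1$ read a candidate assignment and $r$ performs the final collapse, while a fresh fourth letter $\rho$ is added as a \emph{reset} letter, mapping every non-sink state back to its block's initial state ($q_1$ in each $A_j$, $p_1$ in $T$) and fixing $\top$. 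The point of $\rho$, in the spirit of Lemma~1 of \cite{Vorel2016}, is to reduce careful synchronization of the \emph{whole} state set to synchronization of the initial-state subset: after one $\rho$ the set of currently reachable states is exactly the initial states together with $\top$, so the spurious tokens that careful synchronization forces us to track are wiped out and only aligned tokens remain.

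The key steps I would carry out, in order, are: first, establish that $\top$ is the unique sink (every other state is moved by $\rho$ or by $0,1$), so the target of any carefully synchronizing word must be $\top$. Second, show the ``if'' direction: a satisfying assignment $z_1\cdots z_n$ yields the carefully synchronizing word $\rho\,z_1\cdots z_n\,r$, using the fact (from the construction of $A_j$, via the primed chain) that a satisfied clause token reaches $\top$ in exactly $n$ steps and the timer reaches $p_{n+1}$ in $n$ steps, whence $r$ is defined on the whole current set and sends everything to $\top$. Third, the ``only if'' direction: since $r$ is undefined at all $q$- and $q'$-states, it can be legally applied to the whole current set only when every clause block sits entirely at $\top$, and since a timer token reaches $\top$ \emph{only} through $r$, some $r$ must eventually occur; after the last $\rho$ before that $r$, all clause blocks are simultaneously reset to their $q_1$'s, so the intervening $\{0,1\}$-word is a single assignment that must satisfy \emph{every} clause at once, forcing satisfiability. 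Finally, I would exhibit a linear order witnessing monotonicity and check all four letters against it.

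The step I expect to be the main obstacle is precisely this last one: producing a \emph{single} global linear order that makes the combined automaton monotonic. Each gadget $A_j$ is monotonic on its own, but the naive disjoint union is \emph{not}, because at a given reading step some clauses are satisfied and jump straight to the top sink $\top$ while others continue forward to a lower state, and no static ordering of the parallel copies can respect this (within one ``level'' satisfaction depends on the letter, not monotonically on the block index). Consequently the gadget almost certainly has to be redesigned to thread the clauses along one line and check them \emph{serially}, re-reading the same assignment for each clause, with the reset letter $\rho$ and the undefined transitions enforcing both the re-reading and the blocking of clause-falsifying choices; reconciling this serial design with correctness and with a genuine preserved order is where the real work lies. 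A secondary subtlety, to be handled carefully in the ``only if'' argument, is ruling out ``staggered'' solutions that satisfy different clauses between successive resets, which the synchronizing effect of $\rho$ together with the single-$r$ bottleneck is designed to forbid.
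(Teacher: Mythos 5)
Your high-level mechanism is the same as the paper's: partialize the {\sc Finite Automata Intersection} gadgets, add a reset letter (your $\rho$, the paper's $y$) sending every gadget back to its initial state, and a collapse letter (your repurposed $r$, the paper's fresh letter $z$) that is defined only on the ``accepting'' states, so that it can be applied carefully exactly when a single $\{0,1\}$-word of length $n$ has satisfied every clause. But the proposal stalls at the one step you yourself flag as unresolved --- monotonicity of the combined automaton --- and the obstacle there is self-inflicted. It is created by your decision to identify all the accepting sinks $t$ (and $a$) into one global state $\top$: with a merged $\top$, a letter can send a satisfied gadget's token above every block while an unsatisfied gadget's token stays inside its own block, and no static order survives. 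The paper simply does not merge the sinks. Each $A'_j$ keeps its own $t$, the timer keeps $p_{n+1}$, and the fourth letter $z$ maps all of these (and $f$) to a single \emph{new} state $f$ while being undefined everywhere else. A partial letter that is constant wherever it is defined preserves any linear order, and every other letter maps each gadget into itself, so the concatenation of the individual gadgets' orders (with $f$ appended) witnesses monotonicity of the whole automaton. Your proposed escape --- threading the clauses serially and re-reading the assignment once per clause --- is a dead end: nothing in a finite automaton forces the \emph{same} assignment to be re-read, which is precisely why the parallel-blocks construction is used in the first place.

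There is also a soundness problem in the construction as you literally state it. You only undefine the $r$-transitions that went to $s$, so $r$ remains defined on \emph{every} timer state $p_i$ (mapping it to $a=\top$), and the timer state $b$ survives as a second sink. The second sink alone kills the ``if'' direction (two sinks can never be merged), and even after deleting $b$ you must leave the $0,1$-transitions out of $p_{n+1}$ undefined, because in the gadget $A_j$ the state $q_{n+1}$ moves to $t$ under $0$ and $1$: after $n+1$ assignment letters every clause gadget sits at its sink whether or not the clause was satisfied, and your $r$ would then be applicable and synchronize everything, making every instance a yes-instance. The paper avoids this by defining the collapse letter on the timer only at $p_{n+1}$ and leaving $p_{n+1}$ with no $0,1$-transitions, which is what pins the assignment word to length exactly $n$. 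These fixes are all local, but as written the proof is not complete: the monotonicity claim is unproved (and false for the merged-sink design), and the counting argument in the ``only if'' direction does not go through.
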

\begin{proof}
	We reduce the {\sc SAT} problem. Let $A'_j$ be $A_j$ with alphabet restricted to $\{0, 1\}$ and without the state $s$, and $T'$ be $T$ with alphabet restricted to $\{0, 1\}$ and without the states $a, b$. Let $\Sigma' = \{0, 1, y, z\}$. Provided $X$ and $C$, construct $A'_j$ for each clause $c_j$, and also construct $T'$. Let $Q'$ be the union of all states of each $A'_j$, all states of $T'$ and a new state $f$. We expand already defined (for the letters $0, 1$) transition function $\delta$ as follows. Define $y$ to map all states in each $A'_j$ to $q_1$ in this gadget, and all states in $T'$ to $p_1$. Finally, define $z$ to map the state $t$ in each $A_j$, the state $p_{n + 1}$ in $T$ and the state $f$ to $f$. Leave all other transition undefined. We denote thus obtained automaton as $A' = (Q', \Sigma', \delta')$.
	
	Any word $w$ carefully synchronizing $A'$ begins with $y$, since it is the only letter defined for all states. After applying it, the set $S$ of reached states consists of $q_1$ in each $A_j$, $p_1$ in $T$, and $f$. To reach $f$ from this set, we need to apply $z$ at least once. Right before applying $z$, the set of reached states must consist of exactly $n$ letters of the set $\{0, 1\}$, because application of $y$ takes us back to $S$, and applying $0, 1$ more than $n$ times is not defined for the state $p_1$. Thus, in each $A_j$ the state $q_1$ must be mapped to $t$ by $n$ $0$s and $1$s which is possible if and only if $C$ is satisfiable.
\end{proof}

\begin{question}
	What is the complexity of the mentioned problems for binary monotonic automata? Do the problems discussed in this section belong to NP for monotonic automata?
\end{question}

We note that it does not matter in the provided reductions whether a linear order preserved by all transitions is known or not.

\section{Subset Road Coloring} \label{sec-road}

The famous Road Coloring problem is formulated as follows. Given a strongly connected digraph with all vertices of equal out-degree $k$, is it possible to find a coloring of its arcs with letters of alphabet $\Sigma$, $|\Sigma| = k$, resulting in a synchronizing deterministic automaton. This problem was stated in 1977 by Adler, Goodwyn and Weiss \cite{Adler1977} and solved in 2007 by Trahtman \cite{Trahtman2009}. A natural generalization of this problem is to find a coloring of a strongly connected digraph turning it into a deterministic automaton where a given subset of states is synchronizing. We introduce the problem formally and show that its solution is a corollary of a result of B\'{e}al and Perrin \cite{Beal2014}. In particular, the problem of deciding whether such a coloring exists is solvable in polynomial time.

Let $G = (V, E)$ be a strongly connected digraph such that each its vertex has out-degree $k$. A {\em coloring} of $G$ with letters from alphabet $\Sigma$, $|\Sigma| = k$, is a function assigning each arc of $G$ a letter from $\Sigma$, such that for each vertex, each pair of arcs outgoing from it achieves different letters. We say that a coloring {\em synchronizes} $S \subseteq V$ in $G$ if $S$ is a synchronizing set in the resulting automaton.

If the greatest common divisor of the lengths of all cycles of $G$ is $\ell$, the set $V$ can be partitioned into sets $V_1, \ldots, V_\ell$ in such a way that if $(v, u)$ is an arc of $G$, then $v \in V_i, u \in V_{i + 1}$ or $v \in V_\ell, u \in V_1$ \cite{Friedman1990}. Moreover, such partition is unique.

\begin{theorem}
	A strongly connected digraph $G$ with vertices of equal out-degree has a coloring synchronizing a set $S \subseteq V$ if and only if $S \subseteq V_i$ for some $i$.
\end{theorem}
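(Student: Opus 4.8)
The plan is to prove the two directions separately, with the ``only if'' direction coming directly from the rigid cyclic structure of $G$ and the ``if'' direction reducing to the result of B\'eal and Perrin \cite{Beal2014}.

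First I would handle necessity. The partition $V_1, \ldots, V_\ell$ depends only on $G$ and not on the coloring, and by its defining property every arc leads from some $V_i$ to $V_{i+1}$ (indices taken modulo $\ell$). Hence, in the automaton produced by \emph{any} coloring, a word of length $m$ maps each state of $V_i$ into $V_{i + m \bmod \ell}$. Consequently, two states lying in distinct classes $V_i, V_j$ with $i \not\equiv j \pmod \ell$ are sent by every word into the distinct classes $V_{i+m}, V_{j+m}$, so they can never be merged into a single state. Therefore a synchronizing set cannot meet two different classes, which gives $S \subseteq V_i$ for some $i$.

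For sufficiency I would invoke B\'eal and Perrin \cite{Beal2014}, whose result provides a coloring of $G$ whose rank equals $\ell$, the minimum possible value. I fix such a coloring together with a word $w$ of some length $m$ realizing this rank, so that $|\delta(V, w)| = \ell$. The counting step is the heart of the argument: since $i \mapsto i + m$ is a bijection of $\mathbb{Z}/\ell\mathbb{Z}$ and $\delta(V_i, w) \subseteq V_{i+m}$, the image splits as the disjoint union over the $\ell$ classes, with $\delta(V, w) \cap V_{i+m} = \delta(V_i, w)$. Each $V_i$ is nonempty, so each summand has size at least one; as the $\ell$ summands total exactly $\ell$, every one of them has size exactly one. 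Thus $|\delta(V_i, w)| = 1$ for every $i$, i.e. the single word $w$ synchronizes each class $V_i$. In particular it synchronizes the class containing our given $S$, and since any subset of a synchronizing set is synchronizing, $S$ is synchronized by this coloring.

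The main obstacle is locating and correctly applying the B\'eal--Perrin theorem: one must confirm that their minimum-rank coloring yields rank exactly $\ell$ for a strongly connected constant-out-degree digraph of period $\ell$, matching the trivial lower bound $\operatorname{rank} \ge \ell$ that follows from the same cyclic-class argument used in the necessity direction (the image under any word meets all $\ell$ classes, each contributing at least one state). Once that input is in hand, the remainder is the short pigeonhole computation above, and no delicate estimates are required.
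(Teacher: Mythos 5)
Your proposal is correct and follows essentially the same route as the paper: necessity from the fact that the cyclic classes $V_1,\ldots,V_\ell$ are preserved by every coloring so states in distinct classes can never merge, and sufficiency by taking the rank-$\ell$ coloring of B\'eal and Perrin and observing (by the pigeonhole count you spell out) that a word of rank $\ell$ must then collapse each class $V_i$ to a single state. The paper states this pigeonhole step more tersely, but the argument is identical.
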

\begin{proof} Obviously, if two vertices of $S$ belong to distinct sets $V_i$ and $V_j$, $S$ can not be synchronized. Assume that $S \subseteq V_i$ for some $i$. As proved in \cite{Beal2014}, there exists a coloring of $G$ such that the resulting automaton $A$ has rank $\ell$. In this coloring each $V_j$, $1 \le j \le \ell$, is a synchronizing set, since no two states from two different sets $V_p, V_t$, $p \not = t$, can be synchronized and $A$ has rank $\ell$. Hence, $S \subseteq V_i$ is also a synchronizing set.
\end{proof}

According to this theorem, checking whether there exists such a coloring can be performed in polynomial time. This coloring can be constructed in polynomial time using the algorithm from \cite{Beal2014}.

\subsubsection*{Acknowledgments}

We thank Ilia Fridman for useful comments on the presentation of this paper, and anonymous reviewers for multiple useful comments and suggestions that improved the content and presentation of the preliminary version of this paper.

{\footnotesize
\bibliography{SyncBib}
	\bibliographystyle{alpha}}
	
\end{document}